\spnewtheorem{obs}{Observation}{\bfseries}{\itshape}
\DeclareMathOperator{\emw}{emw}
\DeclareMathOperator{\tw}{tw}
\begin{document}
\title{Embedded-width: A variant of treewidth for plane graphs}
\author{Glencora Borradaile\inst{1} \and Jeff Erickson\inst{2} \and Hung Le\inst{1} \and Robbie Weber\inst{3}}
\institute{Oregon State University \and University of Illinois, Urbana-Champaign \and University of Washington}
\maketitle
\begin{abstract}
We define a special case of tree decompositions for planar graphs  that respect a given embedding of the graph.  We study the analogous width of the resulting decomposition we call the {\em embedded-width} of a plane graph.  We show both upper bounds and lower bounds for the embedded-width of a graph in terms of its treewidth and describe a fixed parameter tractable algorithm to calculate the embedded-width of a plane graph.  To do so, we give novel bounds on the size of matchings in planar graphs.
\end{abstract}

\section{Introduction}

Treewidth is a graph parameter introduced by Robertson and Seymour \cite{RobertsonGM2}, measuring how ``tree-like'' a graph is.  (Formal definitions are in Section~\ref{sec:twDef}.)
	Treewidth has proven useful both in the development of theoretical results such as The Graph Minor Theorem, and in the design of algorithms on NP-hard problems~\cite{Bodlaender94}. Algorithmically, tree decompositions demonstrate a structure in a graph that can be exploited by dynamic programming approaches~\cite{Bodlaender88Dynamic}, which is particularly useful for planar graphs~\cite{Baker94}.
Calculating the treewidth of an arbitrary graph is NP-hard~\cite{Arnborg87}, and it is a long-standing open problem whether restricting the input to planar graphs allows for a polynomial-time algorithm or whether it remains NP-hard~\cite{Bodlaender94}.

To better understand the relationship between treewidth and planar graphs, we propose, when dealing with a planar embedded graph, to restrict our attention to tree decompositions that respect the embedding of the graph.  Recall that a tree decomposition is a tree whose nodes are mapped to subsets of vertices, called \emph{bags}, that satisfy three properties that we state in the sequel. We say that a tree decomposition respects an embedding if, for every bounded face $f$, at least one bag contains every vertex of $f$.  We call the minimum width of such a decomposition the embedded-width or em-width of the embedding.

The purpose of this proposed width measure is two-fold.  
First, we note that algorithms for planar graphs frequently make explicit use of the embedding in their execution (for example, considering edges in clockwise order).  Further, algorithms for planar graphs that rely on tree decompositions often exploit the embedding of the graph in order to build the decomposition.  In fact, plane graphs whose dual graphs have bounded depth have natural tree decompositions that respect the embedding of the primal graph (though they may not be minimum width)\cite{Eppstein95}.
We believe that decompositions that respect the embedding will prove more useful than standard tree decompositions in algorithm design.

Second, we hope that by better understanding tree decompositions that respect the embedding of plane graphs we may be able to make progress toward resolving the long-open problem of determining the complexity of computing the treewidth of planar graphs by giving a tool for exploiting the embeddings of planar graphs in calculating decompositions.  In fact, because every tree decomposition of a planar triangulation respects the embedding and every plane graph can be triangulated without increasing the treewidth (Biedl and Vel\'{a}zquez \cite{TL2013}) we immediately have the following:

\begin{theorem}
If a planar graph $G$ has treewidth at least 3, then there exists a triangulation $\hat{G}$ of an embedding of $G$ such that the em-width of $\hat G$ is equal to the treewidth of $G$.
\end{theorem}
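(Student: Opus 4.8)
The plan is to combine the two facts highlighted in the paragraph preceding the statement, so the argument should be short once both ingredients are in place. First I would fix an arbitrary planar embedding of $G$ and invoke the result of Biedl and Vel\'{a}zquez to obtain a plane triangulation $\hat G$ of that embedding with $\tw(\hat G) = \tw(G)$. Here I would stress that the equality, rather than merely an inequality, follows from two observations: since $G$ is a subgraph of $\hat G$, monotonicity of treewidth under subgraphs gives $\tw(\hat G) \ge \tw(G)$, while the cited triangulation procedure guarantees $\tw(\hat G) \le \tw(G)$. The hypothesis $\tw(G) \ge 3$ is precisely what makes the triangulation avoid raising the treewidth: any planar triangulation on at least four vertices is $3$-connected and hence already has treewidth at least $3$ (it contains a $K_4$ minor), so for smaller treewidth the value could not be preserved.

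The second and conceptually cleaner step is to show that for a triangulation the embedded-width coincides with the ordinary treewidth, that is, $\emw(\hat G) = \tw(\hat G)$. One direction is free: because every decomposition that respects the embedding is in particular a tree decomposition, restricting the minimum width to embedding-respecting decompositions can only increase it, so $\emw(\hat G) \ge \tw(\hat G)$. For the reverse I would argue that in a triangulation \emph{every} tree decomposition already respects the embedding, so the two minima are taken over the same family. The key tool is the standard clique-containment property of tree decompositions: for any clique $W$ of a graph, every tree decomposition has a bag containing all of $W$, a consequence of the Helly property for subtrees of a tree. Each bounded face of $\hat G$ is a triangle whose three vertices are pairwise adjacent and hence form a clique, so some bag contains all three; as this holds for every bounded face, every tree decomposition of $\hat G$ respects the embedding, giving $\emw(\hat G) \le \tw(\hat G)$.

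Chaining the two steps yields $\emw(\hat G) = \tw(\hat G) = \tw(G)$, which is the claim. I do not expect a genuine obstacle, since the heavy lifting is delegated to the externally cited triangulation theorem; the only points demanding care are verifying that treewidth is preserved \emph{exactly} (not strictly increased) under the $\tw(G) \ge 3$ hypothesis, and applying the clique-containment lemma correctly to conclude that triangular faces force the required bags. If anything, the subtlety lies in confirming that the Biedl--Vel\'{a}zquez construction applies to the chosen embedding and produces a genuine triangulation of it, so that all bounded faces are triangles and the clique argument goes through verbatim.
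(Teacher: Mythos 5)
Your proposal is correct and follows exactly the route the paper intends: the paper derives the theorem "immediately" from the Biedl--Vel\'{a}zquez triangulation result together with the fact that triangular faces are cliques, so Observation~4 (clique containment) forces every tree decomposition of a triangulation to respect the embedding. You have simply filled in the details (subgraph monotonicity for the equality of treewidths, and the role of the $\tw(G)\ge 3$ hypothesis) that the paper leaves implicit.
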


We discuss further connections between treewidth and em-width in Section~\ref{sec:emWidthBounds} following formal definitions.  In Section~\ref{sec:algorithm}, we adapt an algorithm of Bodlaender \cite{Bodlaender93,BDDFLP16} for computing treewidth to compute em-width.  This recursive algorithm requires the existence of a large matching for the recursion to sufficiently reduce the size of the graph; we give such bounds in Section~\ref{sec:match}.

\section{Definitions}

We use standard graph theoretic notation.  $G[X]$ is the subgraph of $G$ induced by a set of vertices $X$.

An \emph{embedding} of a planar graph is a mapping from vertices to distinct points in the plane, and from edges to non-intersecting curves (whose endpoints are the images of the vertices the edge connects).  A planar graph with a given embedding is called a {\em plane graph}.
A \emph{face} of a plane graph $G$ is a component of the complement of the image of $G$. The \emph{boundary} of a face $f$, denoted $\partial f$, is the subgraph of $G$ whose image is contained in the closure of $f$. $V(\partial f)$ denotes the set of vertices of $\partial f$.  The \emph{length} of a face is the number of vertices on its boundary. We write $f_{\infty}$ to denote the unique unbounded face of $G$.

The \emph{dual} of a plane graph $G$ is another plane graph $G^*$: $G^*$ has a vertex for every face of $G$ and for every edge $e$ of $G$, there is a corresponding edge $e^*$ whose endpoints are the faces which have $e$ on their boundary.  For a vertex $u$ of $G$, we denote the corresponding face in $G^*$ by $u^*$.

An embedding is \emph{outerplanar} if all of the vertices are contained in $\partial f_{\infty}$. More generally, a graph is $k$-outerplanar if removing all vertices in $\partial f_{\infty}$ results in an $(k-1)$-outerplanar graph, where a 1-outerplanar graph is an outerplanar graph.  We make use of the following equivalent definition: Assign the label $1$ to every vertex on the outer face; label each face with the smallest label that appears on a vertex on their boundary; and label each other vertex with one more than the minimum label of its incident faces. Then an embedding is $k$-outerplanar if $k$ is the largest label on a vertex.  We will call this labeling the \emph{outerplanarity labeling}.

\subsection{Treewidth} \label{sec:twDef}

\begin{definition} A \emph{tree decomposition} of a graph $G=(V,E)$ is a tuple $(T = (I,F), \{X_i \mid i \in I\})$ where $T$ is a tree and $X_i \subseteq V$ for each index $i\in I$, such that
\begin{enumerate}
	\item Every vertex lies in a bag: $\bigcup_{i\in I} X_i = V$,
	\item Every edge lies in a bag: For every edge $(u,v)$, there is a bag $X_i$ that contains both $u$ and $v$, and
	\item Vertices induce subtrees: For every vertex $v$, the index set $\{i\in I \mid v \in X_i\}$ induces a connected subgraph of $T$.
\end{enumerate}
	The vertex subsets $X_i$ are called \emph{bags}. The \emph{width} of a decomposition is one less than the size of the largest bag, and the \emph{treewidth} of a graph, denoted $\tw(G)$, is the minimum width across all tree decompositions of $G$.
\end{definition}
	
\noindent We use several well-known observations about treewidth:

\begin{enumerate}
\item  \label{obs:twdual}
$\tw(G^*) \leq \tw(G) + 1$. \cite{Lapoire96,Bouchitte01}
\item \label{obs:grid}
  The treewidth of the $p \times q$ grid is $\min\{p,q\}$. \cite{Bodlaender94}
\item \label{obs:subdividing} Subdividing an edge does not change the treewidth of a graph. \cite{Lozin06}
\item \label{obs:cliqueBag} If $S \subseteq V$ form a clique in $G$ then every tree decomposition of $G$ contains a bag $X_i$ such that $S \subseteq X_i$. \cite{Bodlaender88Dynamic}
\item \label{obs:tw3k1} The treewidth of a $k$-outerplanar graph is at most $3k-1$.  \cite{Bodlaender88}
\end{enumerate}

\subsection{Embedded-width}

When we restrict our attention to planar graphs, we often wish to make use of a particular embedding of that graph.  However, tree~decompositions are embedding-agnostic.  We define a special case of tree decompositions that incorporates the embedding of a plane graph by requiring that the vertices of each bounded \emph{face} lie in a bag.  Formally:

\begin{definition}
An \emph{embedded tree decomposition} (or {\em em-decomposition}) of a plane graph $G = (V,E)$ is a tree decomposition $(T,{\cal X})$ of $G$ that additionally satisfies a fourth property:
\begin{enumerate}\setcounter{enumi}{3}
	\item for each bounded face $f$ of $G$, there is a bag $X\in \mathcal{X}$ that contains every vertex in $V(\partial f)$.
\end{enumerate}
The \emph{width} of an em-decomposition is the size of the largest bag minus one. The \emph{embedded-width} or \emph{em-width} of a plane graph, denoted $\emw(G)$, is the minimum width across all em-decompositions of $G$.
\end{definition}

We restrict ourselves to the bounded faces of a plane graph so that many natural widths still hold: the em-width of a tree is 1 and the em-width of every outerplanar triangulation is 2.  This restriction also implies that the tree decompositions of a triangulated plane graph whose dual has bounded depth are embedded tree decompositions~\cite{Eppstein95}.

We make use of the following to connect the em-decomposition and tree decomposition of a plane graph. Define the \emph{facial completion} $\widetilde{G}$ of a plane graph $G$ by adding to each bounded face $f$ a clique on the vertices of $\partial f$.

\begin{lemma} \label{lem:completionWorks}
A decomposition is an em-decomposition of $G$ if and only if it is a tree decomposition of $\widetilde{G}$.
\end{lemma}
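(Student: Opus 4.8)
The plan is to prove both directions of the equivalence directly, relying on the single structural fact that the facial completion $\widetilde{G}$ has exactly the same vertex set as $G$ and differs only by the addition of edges. Because tree-decomposition conditions (1) and (3) refer only to vertices and bags (never to edges), they hold for $G$ if and only if they hold for $\widetilde{G}$; this observation is what lets the whole argument reduce to reconciling the edge-coverage condition (2) with the extra face condition (4).

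For the forward direction, I would assume $(T,\mathcal{X})$ is an em-decomposition of $G$. It is then already a tree decomposition of $G$, so (1), (3), and edge coverage for the edges of $G$ all hold; the only thing to verify is edge coverage for the edges that $\widetilde{G}$ adds. Each added edge $(u,v)$ belongs to the clique placed on $V(\partial f)$ for some bounded face $f$, so $u,v \in V(\partial f)$. Em-decomposition property (4) supplies a bag $X$ with $V(\partial f) \subseteq X$, hence $u,v \in X$, establishing (2) for every edge of $\widetilde{G}$ and completing this direction.

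For the reverse direction, I would assume $(T,\mathcal{X})$ is a tree decomposition of $\widetilde{G}$. Since $G$ is a subgraph of $\widetilde{G}$ on the same vertex set, conditions (1)--(3) restrict immediately to a tree decomposition of $G$ (edge coverage for the edges of $G$ is a special case of edge coverage for $\widetilde{G}$). It then remains only to verify property (4): for each bounded face $f$, the set $V(\partial f)$ induces a clique in $\widetilde{G}$ by construction, so Observation~\ref{obs:cliqueBag} guarantees a bag containing all of $V(\partial f)$, which is exactly what property (4) demands.

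The argument is essentially bookkeeping once this reformulation is in place, and I do not expect a genuine obstacle. The one point that warrants care is confirming that the two directions rest on two distinct, complementary facts—property (4) is used to cover the new clique edges in one direction, while Observation~\ref{obs:cliqueBag} recovers property (4) from the clique structure in the other—so that no circular dependence slips in. Noting up front that the facial completion introduces no new vertices is what keeps conditions (1) and (3) common to both structures and removes them from further consideration.
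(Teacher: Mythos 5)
Your proposal is correct and follows essentially the same route as the paper: the forward direction uses property~4 to cover the clique edges added in $\widetilde{G}$, and the reverse direction restricts to $G$ as a subgraph and recovers property~4 from Observation~\ref{obs:cliqueBag}. No gaps.
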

\begin{proof} 
An em-decomposition of $G$ satisfies properties~1 and~3 of being a tree decomposition for $\widetilde{G}$ by the same properties of em-decompositions.  Every edge $uv$ of $\widetilde{G}$ that is not an edge of $G$ connects two vertices on the same bounded face of $G$.  By property~4 of em-decompositions, there is a bag containing both $u$ and $v$, so the em-decomposition of $G$ satisfies property~2 of being a tree decomposition for $\widetilde{G}$.

A tree decomposition of $\widetilde{G}$ is also a tree decomposition of $G$ (because $G$ is a subgraph of $\widetilde{G}$). Property~4 follows from Observation~\ref{obs:cliqueBag}.
\qed \end{proof}

\section{Bounds on embedded-width} \label{sec:emWidthBounds}

We give upper bounds on the em-width of a plane graph in terms of its treewidth and the length of its longest face.   Specifically, we prove, in Section~\ref{sec:twbounds}:
\begin{theorem} \label{thm:emwtwAll}
If $G$ is plane graph where every bounded face has length at most $\ell$ then \[\emw(G) \leq (\tw(G) + 2)\cdot \ell - 1.\]
\end{theorem}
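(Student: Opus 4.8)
The plan is to route the bound through the facial completion and the planar dual. By Lemma~\ref{lem:completionWorks}, a decomposition is an em-decomposition of $G$ exactly when it is a tree decomposition of $\widetilde G$, and since both notions of width are defined identically, $\emw(G) = \tw(\widetilde G)$. It therefore suffices to exhibit a tree decomposition of $\widetilde G$ whose bags have size at most $(\tw(G)+2)\cdot \ell$.

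To build such a decomposition, I would start from an optimal tree decomposition $(T,\{Y_i\})$ of the dual $G^*$. By Observation~\ref{obs:twdual} we may take its width to be at most $\tw(G)+1$, so each bag $Y_i$ --- a set of faces of $G$ --- has size at most $\tw(G)+2$. I then expand each bag by replacing every bounded face it contains with that face's boundary vertices, setting $X_i = \bigcup_{f \in Y_i,\, f \neq f_\infty} V(\partial f)$. Since every bounded face has length at most $\ell$, each $X_i$ has at most $(\tw(G)+2)\cdot\ell$ vertices, which is exactly the bag size we want.

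The bulk of the work is checking that $(T,\{X_i\})$ is a tree decomposition of $\widetilde G$. Property~4 is immediate: each bounded face $f$ lies in some dual bag $Y_i$, so $X_i \supseteq V(\partial f)$, and the clique on $\partial f$ added in $\widetilde G$ is covered. Coverage of the original vertices and edges, together with the subtree property, is where the embedding enters: for a vertex $v$, the bounded faces incident to $v$ correspond to (most of) the dual face $v^*$, and when these faces induce a connected subgraph of $G^*$, the standard fact that the subtrees of a connected vertex set have connected union yields the subtree property, while the incidence of $v$ and each of its edges to a bounded face yields coverage. I expect this verification to go through cleanly when $G$ is $2$-connected, since then $v^*$ is a cycle, every vertex lies on a bounded face, and every edge is shared with a bounded face.

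The main obstacle is the general, non-$2$-connected case, where this argument genuinely breaks: around a cut vertex the outer face can appear several times in the rotation, splitting the bounded faces incident to $v$ into several arcs that need not be connected in $G^*$, and bridges and tree-like parts contribute vertices and edges lying on no bounded face at all (for a tree, $\widetilde G = G$ and the dual construction collapses entirely). I would handle this by reducing to the $2$-connected case: decompose $G$ along its block--cut tree, apply the dual construction within each $2$-connected block, treat bridge-blocks trivially, and glue the resulting decompositions at the cut vertices in the standard way, checking that this gluing preserves both the width bound and property~4. Making this reduction precise --- in particular verifying that the facial completion interacts well with the block structure and that no glued bag grows beyond $(\tw(G)+2)\cdot\ell$ --- is the technical heart of the argument.
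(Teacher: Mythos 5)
Your proposal follows essentially the same route as the paper: expand the bags of a width-$(\tw(G)+1)$ tree decomposition of the dual (the paper uses the weak dual $G^+ = G^*\setminus\{f_\infty^*\}$, which is equivalent to your exclusion of $f_\infty$ from the expansion), verify the subtree property via connectivity of the set of bounded faces around each vertex (the paper's Lemma~\ref{newdualConn}), and handle cut vertices on the outer face and the tree-like pieces lying on no bounded face by decomposing into blocks and gluing at cut vertices (the paper's ``pseudo'' block decomposition plus the attached decompositions of the trees $F_i$). The remaining details you flag as the technical heart are exactly the ones the paper fills in, so the approach is sound and matches.
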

In combination with Bodlaender's bound on the treewidth of $k$-outerplanar graphs (Observation~\ref{obs:tw3k1}), Theorem~\ref{thm:emwtwAll} implies that the em-width of a $k$-outerplanar graph where every bounded face has length at most $\ell$ is at most $(3k+1)\ell-1$.  We obtain a slightly tighter upper bound using a more direct proof:

\begin{theorem} \label{thm:emw3kl}
  If $G$ is a $k$-outerplanar graph where every bounded face has length at most $\ell$, then \[\emw(G) \leq 3k\ell - 1.\]
\end{theorem}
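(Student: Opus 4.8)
The plan is to avoid passing through the facial completion $\widetilde{G}$ explicitly and instead transfer a tree decomposition of the \emph{inner dual} of $G$ directly into an em-decomposition. Here the inner dual $G^{\circledast}$ is the graph whose vertices are the bounded faces of $G$ and whose edges join two bounded faces that share an edge of $G$. The central observation is that any tree decomposition $(T,\{Y_i\})$ of $G^{\circledast}$ can be ``inflated'' into a candidate em-decomposition of $G$ by replacing each face-node with its boundary vertices: set $Z_i = \bigcup_{f \in Y_i} V(\partial f)$. Since every bounded face has at most $\ell$ vertices, $|Z_i| \le \ell\cdot|Y_i|$, so a width-$w$ tree decomposition of $G^{\circledast}$ yields bags of size at most $\ell(w+1)$. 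It therefore suffices to produce a tree decomposition of $G^{\circledast}$ of width at most $3k-1$.

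First I would reduce to the case that $G$ is $2$-connected: em-width is governed blockwise (trees and single edges are the base cases, covered by the convention that the em-width of a tree is $1$), and block decompositions can be joined at cut vertices without increasing the maximum bag size; blocks of a $k$-outerplanar graph remain $\le k$-outerplanar and their faces keep length at most $\ell$. For $2$-connected $G$, every edge borders two distinct faces, at least one of which is bounded, and every vertex is incident to some bounded face; this is what makes the inflation cover all vertices and edges.

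The heart of the argument is to show that $G^{\circledast}$ is itself $k$-outerplanar, so that Observation~\ref{obs:tw3k1} supplies a tree decomposition of width at most $3k-1$, i.e.\ $|Y_i| \le 3k$, and hence $|Z_i| \le 3k\ell$. I would prove this by comparing the onion-peeling that defines $k$-outerplanarity of $G^{\circledast}$ with the outerplanarity labeling of $G$: the bounded faces meeting $\partial f_\infty$ (those receiving label $1$) are precisely the outer layer of $G^{\circledast}$, and peeling them corresponds to advancing one level inward in $G$, so after at most $k$ peels $G^{\circledast}$ is exhausted. Note that the cruder bound $\tw(G^{\circledast})\le\tw(G^{*})\le\tw(G)+1$ from Observation~\ref{obs:twdual} would only give bags of size $\le(\tw(G)+2)\ell$, an additive $\ell$ worse than the target; that slack is exactly what the more direct route removes.

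Finally I would verify that $(T,\{Z_i\})$ is a genuine em-decomposition. Property~4 is immediate: each bounded face $f$ is a vertex of $G^{\circledast}$, so $f\in Y_i$ for some $i$ and then $V(\partial f)\subseteq Z_i$. Properties~1 and~2 follow from the $2$-connected reduction above. The delicate property is~3, that each vertex occupies a subtree: the nodes containing a vertex $v$ are exactly those $i$ for which $Y_i$ meets the set $\mathcal{F}_v$ of bounded faces incident to $v$. The faces incident to $v$ form a cycle in the full dual $G^{*}$ (the boundary of the face $v^{*}$), and deleting the single outer-face vertex leaves either this cycle or a path, so $\mathcal{F}_v$ induces a connected subgraph of $G^{\circledast}$; the standard fact that the nodes meeting a connected subgraph form a connected subtree then yields property~3. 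The main obstacle I anticipate is establishing the $k$-outerplanarity of the inner dual cleanly, together with the bookkeeping around $f_\infty$ needed to keep both the vertex/edge coverage and the connectivity argument valid once the outer face is excluded.
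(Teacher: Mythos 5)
Your route is essentially the paper's: your ``inner dual'' $G^{\circledast}$ is the weak dual $G^+$ used throughout Section~\ref{sec:emWidthBounds}, your inflation $Z_i=\bigcup_{f\in Y_i}V(\partial f)$ is exactly the conversion performed in Lemma~\ref{lma:connWeakDual}, your key claim that $G^{\circledast}$ is at most $k$-outerplanar is precisely Lemma~\ref{lem:dualOP} (proved there via a reversed outerplanarity labeling, much as you sketch), and your property-3 argument via the connectivity of the set of bounded faces around a vertex is Lemma~\ref{newdualConn}. So the core of the argument is sound and matches the paper; the outerplanarity of the dual, which you flag as your main worry, does go through.

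The genuine gap is in your reduction to the $2$-connected case. A bounded face of $G$ can have a boundary spanning several blocks: take a cycle $C$ and attach a pendant edge (or tree) at a vertex of $C$, embedded \emph{inside} $C$. The resulting bounded face of $G$ has $V(C)$ together with the pendant vertices on its boundary, but no single block contains that face, so gluing blockwise em-decompositions at cut vertices produces no bag containing all of $V(\partial f)$, and property~4 fails. (Relatedly, the bounded faces of a block with its induced embedding need not be bounded faces of $G$, so ``their faces keep length at most $\ell$'' is a statement about the wrong faces.) The paper avoids this by using a coarser ``pseudo'' block decomposition that cuts only at cut vertices incident to $f_\infty$ more than once, which guarantees that every bounded face of $G$ lies entirely within one piece; the pendant trees that then remain inside a piece but touch no bounded face are absorbed by a separate step (step~2 of the construction in Lemma~\ref{lma:connWeakDual}). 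If you replace your standard block reduction with this coarser one --- keeping your connectivity argument for pieces whose weak dual is connected --- the rest of your proof goes through.
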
 

These bounds are optimal up to constant factors:
\begin{lemma} \label{lem:lbemwtwAll}
For all $t$ and all sufficiently large $n$,
there is a plane graph $G$ with $n$ vertices with treewidth $t$, where all bounded faces have length at most $\ell$, such that
\[\emw(G) \ge (\ell/2-1)(t-1).\]
\end{lemma}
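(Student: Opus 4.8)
The plan is to exhibit a single family of plane graphs realizing the bound. By Lemma~\ref{lem:completionWorks} it suffices to construct, for each $t$ and each target size, a plane graph $G$ with $\tw(G)=t$, all bounded faces of length exactly $\ell$, and $\tw(\widetilde G)\ge(\ell/2-1)(t-1)$, since the em-width equals $\tw(\widetilde G)$. The key idea is that a length-$\ell$ face contributes a clique of size $\ell$ to $\widetilde G$ that ``bridges'' an entire vertical band, so that $t-1$ stacked bands of faces multiply the $t-1$ units of treewidth of $G$ by roughly $\ell/2$.

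Concretely, I would start from the $t\times M$ grid on vertices $v_{i,j}$ ($1\le i\le t$, $1\le j\le M$) and subdivide each vertical edge $v_{i,j}v_{i+1,j}$ into a path of $\ell/2-1$ edges, leaving the horizontal edges intact. Each bounded face is then bounded by two subdivided vertical paths and two unit horizontal edges, so it has length $2(\ell/2-1)+2=\ell$, as required. Since subdivision preserves treewidth (Observation~\ref{obs:subdividing}) and the $t\times M$ grid has treewidth $\min\{t,M\}$ (Observation~\ref{obs:grid}), choosing $M\ge t$ gives $\tw(G)=t$.

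For the lower bound I would exhibit a large grid as a subgraph of $\widetilde G$. Reading each column $j$ from top to bottom gives a path $P_j$ on $H=(\ell/2-1)(t-1)+1$ vertices (the row vertices together with the subdivision vertices). The facial completion of the face between columns $j$ and $j+1$ in band $i$ is a clique on all of its boundary vertices, which includes both vertical bands; hence every vertex of $P_j$ at a given height is adjacent in $\widetilde G$ to the vertex of $P_{j+1}$ at the same height (directly, at the original row vertices, and through the completion clique at the interior subdivision vertices). Thus $\widetilde G$ contains the $H\times M$ grid as a subgraph, and by minor-monotonicity together with Observation~\ref{obs:grid} we get $\tw(\widetilde G)\ge\min\{H,M\}$. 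Taking $M\ge H$ yields $\emw(G)=\tw(\widetilde G)\ge H-1=(\ell/2-1)(t-1)$; enlarging $M$ (and padding with, e.g., a pendant path that affects neither the treewidth nor the face lengths) realizes every sufficiently large $n$.

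I expect the main obstacle to be the verification that the completion edges supply horizontal adjacencies at \emph{every} height rather than only at the $t$ original row levels: this is precisely the step where the within-face factor of $\ell/2$ and the across-band factor of $t-1$ combine multiplicatively, and it relies on each length-$\ell$ face spanning a full band of both neighboring columns. A secondary point requiring care is pinning $\tw(G)$ to exactly $t$ (as opposed to merely $\Omega(t)$), which the flat (rather than cylindrical) grid layout is chosen to guarantee, since its long outer boundary is the unbounded face and therefore imposes no constraint.
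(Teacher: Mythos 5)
Your proposal is correct and follows essentially the same route as the paper: subdivide the vertical edges of a grid so every bounded face has length $\ell$, observe via Lemma~\ref{lem:completionWorks} that the facial completion contains a tall grid (whose horizontal adjacencies at subdivision heights come from the face cliques), and conclude with Observations~\ref{obs:grid} and~\ref{obs:subdividing}. The only differences are cosmetic parametrizations of the grid dimensions and the amount of subdivision.
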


\begin{lemma} \label{lem:emwoplb}
For all $k$ and all sufficiently large $n$, there is a $k$-outerplanar graph $G$ with $n$ vertices, where all bounded faces have length at most $\ell$, such that
\[\emw(G) \ge (\ell/2-1)(2k-1).\] 
\end{lemma}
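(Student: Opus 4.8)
The plan is to reduce the em-width lower bound to a treewidth lower bound on the facial completion, and then to build an explicit $k$-outerplanar graph whose completion is a ``blown-up'' grid of the required size. By Lemma~\ref{lem:completionWorks} the em-decompositions of $G$ are exactly the tree decompositions of $\widetilde G$, so $\emw(G)=\tw(\widetilde G)$; it therefore suffices to construct a $k$-outerplanar $G$ in which every bounded face has length at most $\ell$ and $\tw(\widetilde G)\ge(\ell/2-1)(2k-1)$.

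For the construction, set $s=\lceil \ell/2\rceil-1$ and take $2k$ disjoint horizontal paths (``tracks'') $R_1,\dots,R_{2k}$, each on vertices $(i,0),\dots,(i,b)$ with $b$ large, drawn as nested horizontal lines. For each pair of consecutive tracks $R_i,R_{i+1}$ add a vertical ``rung'' $(i,j)(i+1,j)$ exactly at the columns $j\equiv 0 \pmod s$. First I would verify the two defining properties. Every bounded face is bounded by an arc of $s$ edges along $R_i$, an arc of $s$ edges along $R_{i+1}$, and two rungs, so it has length $2s+2\le\ell$. The nested-track embedding assigns outerplanarity labels $1,2,\dots,k,k,\dots,2,1$ to $R_1,\dots,R_{2k}$ (the sparse rungs only lower connectivity and so do not raise a label), so $G$ is exactly $k$-outerplanar. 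Taking $b\gtrsim (2k-1)s^2$, which is possible for all large $n$ since $n=2k(b+1)$, guarantees enough columns for the long dimension.

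Next I would analyze $\widetilde G$. Completing a length-$\ell$ face turns the $2(s+1)$ vertices of its rung-block into a clique, so within every block the $s+1$ vertices of $R_i$ become completely joined to the $s+1$ vertices of $R_{i+1}$. Viewing each length-$s$ arc of a track as a cluster, $\widetilde G$ then contains a grid-like graph on the $2k\times(b/s)$ array of clusters in which vertically adjacent clusters are completely joined. From this structure I would carve a grid minor of order $(\ell/2-1)(2k-1)$ -- intuitively the $2k$ tracks thicken by the factor $s$ coming from each arc and combine across the $2k-1$ gaps to yield $s(2k-1)$ independent rows -- and then apply Observation~\ref{obs:grid} together with minor-monotonicity of treewidth to conclude $\tw(\widetilde G)\ge(\ell/2-1)(2k-1)$.

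I expect this last step to be the main obstacle. The edges added by completion are complete in the vertical (radial) direction within a block but only edge-sharing in the horizontal (tangential) direction, so the blown-up object is irregular and a clean grid minor must be carved out with care; in particular, choosing branch sets so that the short side of the minor is exactly $(\ell/2-1)(2k-1)$ rather than off by a constant factor is the delicate part, and if the branch-set bookkeeping becomes unwieldy I would instead exhibit a bramble of the same order. A secondary point worth stressing is that the faces must be fattened \emph{tangentially} (along the tracks) rather than radially: subdividing the rungs to lengthen faces would insert new outerplanarity layers and destroy $k$-outerplanarity, so all of the face length must be spent in the long direction.
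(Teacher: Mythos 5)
There is a genuine gap: your construction does not have the required em-width, and the step you flag as ``the main obstacle'' cannot be completed for it. In your graph the only edges crossing a rung column $js$ go through the $2k$ vertices of that column, since every completion clique lives inside a single block of columns $[(j-1)s,\, js]$ or $[js,\,(j+1)s]$. Hence each rung column is a $2k$-vertex separator, and $\widetilde G$ is a path of chunks glued along these columns. Each chunk is (after completion) the blown-up path $P_{2k}$ with each vertex replaced by a clique on $s+1$ vertices and consecutive blobs completely joined; taking bags $D_i=(\text{row }i)\cup(\text{row }i+1)$ and adding the boundary-column vertices to every bag gives a tree decomposition of width at most $2s+4k-3$, and gluing the chunks at the separators preserves this. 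So $\tw(\widetilde G)=O(\ell+k)$, far below the target $(\ell/2-1)(2k-1)$, and no grid minor or bramble of that order exists. The intuitive picture that ``the $2k$ tracks thicken by a factor of $s$'' is wrong: blowing up the vertices of a path into cliques adds to the treewidth, it does not multiply it. The extra rows must come from new vertices \emph{between} the tracks.

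Your ``secondary point'' is also exactly backwards, and it is what steered you away from the working construction. Subdividing the radial (rung) edges does \emph{not} increase outerplanarity: a subdivision vertex lying between layers $i$ and $i+1$ is incident only to faces whose boundaries already contain a label-$i$ vertex, so under the outerplanarity labeling it receives label $i+1$, the same as the next track. The paper's proof does precisely this: take a $p\times q$ grid with $p\approx 2k$ and $q$ large, and subdivide each \emph{vertical} edge $\ell/2-1$ times, so each face has length $\ell$ with almost all of it on its two long radial sides. The facial completion then joins the $j$-th subdivision vertex on the left side of a face to the $j$-th on the right side, so $\widetilde G$ contains a genuine $\bigl((\ell/2-1)(p-1)+1\bigr)\times q$ grid as a subgraph, and Observation~\ref{obs:grid} gives the bound. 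Your reduction to $\tw(\widetilde G)$ via Lemma~\ref{lem:completionWorks} matches the paper; it is the geometry of the fattening that must be transposed.
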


\begin{proof}[of Lemmas~\ref{lem:lbemwtwAll} and~\ref{lem:emwoplb}]
Let $K$ be a $p\times q $ grid with $q > k(p-1)$.  Let $G$ be the plane graph obtained from $K$ by subdividing each vertical edge $k-1$ times.  Then each bounded face of $G$ has length $\ell = 2k+2$. If $G$ has fewer vertices than the required $n$, add a path of the required length to the outer face of $G$.

Let $\widetilde{G}$ be the facial completion of $G$. By Lemma~\ref{lem:completionWorks}, $\tw(\widetilde{G}) = \emw(G)$. Since $\widetilde{G}$ contains a $k(p-1)\times k(p-1)$ grid as a subgraph, $\tw(\widetilde{G}) \geq k(p-1)$ (Observation~\ref{obs:grid}) and therefore $\emw(G) \ge k(p-1)$.  Since each bounded face of $G$ has length $\ell = 2k+2$, we have $\emw(G) \ge (\ell/2-1)(p-1)$.

By Observation~\ref{obs:grid}, $\tw(K) = p$.   By Observation~\ref{obs:subdividing}, $\tw(G) = \tw(K) = p$. This completes the lower bound in terms of treewidth (Lemma~\ref{lem:lbemwtwAll}).

By additionally observing that $G$ has outerplanarity at least $\lceil p/2 \rceil$, we get the lower bound in terms of outerplanarity (Lemma~\ref{lem:emwoplb}).
\qed \end{proof}

\subsection{Weak duals}

We use the weak dual of a given graph to prove our bound. The weak dual, $G^+$, of a planar graph, $G$, is obtained from $G^*$ by deleting $f_\infty^*$~\cite{FGH74}; namely, $G^+ = G^* \setminus \{f_\infty^*\}$.  Our general approach for proving the bounds of Theorems~\ref{thm:emwtwAll} and~\ref{thm:emw3kl} is to find a tree decomposition of $G^+$ and then convert it to an em-decomposition of $G$.  Note that the weak dual of a graph is a subgraph of its standard dual, thus $\tw(G^+) \leq \tw(G^*)$. Combining this with Observation~\ref{obs:twdual}, we get:
\begin{equation} \label{eqn:twweakdual}
\tw(G^+) \leq \tw(G) + 1.
\end{equation}

For a vertex $u$ of $G$,  we use $u^+$ to denote the set of all vertices of $G^+$ that correspond to faces of $G$ with $u$ on their boundary; that is, $u^+ = V(\partial u^*) \setminus \{ f_\infty^* \}$.

\begin{lemma} \label{newdualConn} 
	If $G^+$ is connected then for every vertex $u$ of $G$, $G^+[u^+]$ is connected. Further, for a vertex $u$ of $G$, $G^+[u^+]$ is disconnected if and only if $u$ is a cut vertex of $G$ with at least 4 incident edges in $\partial f_\infty$.
\end{lemma}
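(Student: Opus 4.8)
The plan is to read the connectivity of $G^+[u^+]$ directly off the rotation system at $u$. First I would record the structural backbone: if $e_1,\dots,e_d$ are the edges incident to $u$ in cyclic (rotational) order, the $d$ corners at $u$ belong to faces $g_1,\dots,g_d$ (the faces incident to $u$, listed with multiplicity), where consecutive faces $g_i,g_{i+1}$ are separated by the edge $e_{i+1}$. Dually, the boundary walk of the face $u^*$ of $G^*$ visits exactly these faces in the cyclic order $g_1,g_2,\dots,g_d,g_1$ along the dual edges $e_1^*,\dots,e_d^*$. Hence $V(\partial u^*)=\{g_1,\dots,g_d\}$, and $G^*[V(\partial u^*)]$ contains this closed walk, so it is connected.

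Since $u^+ = V(\partial u^*)\setminus\{f_\infty^*\}$ and $G^+ = G^*\setminus\{f_\infty^*\}$, the graph $G^+[u^+]$ is obtained from that closed walk by deleting every occurrence of $f_\infty$. Deleting these occurrences cuts the walk into arcs, one per maximal run of consecutive non-$f_\infty$ corners; each arc is a path of bounded faces and is therefore connected inside $G^+[u^+]$. Consequently, if $f_\infty$ occupies at most one run of corners at $u$ (in particular if $u\notin V(\partial f_\infty)$), there is at most one arc and $G^+[u^+]$ is connected. Disconnection can occur only when $f_\infty$ occupies at least two runs of corners at $u$, and then only if the arcs fail to reconnect through edges of $G^+$ not incident to $u$.

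I would next rule out any such reconnection using the block structure of $G$. Two bounded faces are adjacent in $G^+$ only if they share an edge of $G$, and each bounded face lies inside a single block (maximal $2$-connected subgraph); hence $G^+$-adjacency, and so every component of $G^+$, stays within one block. When $f_\infty$ separates the faces at $u$ into two or more runs, the corresponding arcs lie in different blocks that meet only at the cut vertex $u$, so they sit in different components of $G^+$ and cannot reconnect --- this is exactly the disconnected case. The first assertion then follows contrapositively from the same picture: if $G^+$ is connected, then (using that the weak dual of a $2$-connected plane graph is connected) $G$ has a single block carrying bounded faces, every face of $u^+$ lies in that block, and within a $2$-connected block the unbounded face meets $u$ in a single run; thus $u^+$ forms one arc and $G^+[u^+]$ is connected.

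The last step is to convert ``$f_\infty$ occupies at least two runs at $u$'' into the stated condition. An incident edge $e_i$ lies on $\partial f_\infty$ exactly when one of its two bordering corners is $f_\infty$, and having at least two runs is equivalent to the outer boundary walk visiting $u$ at least twice, i.e. to $u$ being a cut vertex across which $f_\infty$ separates; in the bridgeless situation each of the $r$ runs contributes exactly two incident outer edges, so $r\ge 2$ matches ``at least $4$ incident edges in $\partial f_\infty$.'' I expect the main obstacle to be precisely this bookkeeping. A run of length greater than one occurs exactly at a bridge at $u$, which inflates the number of incident edges on $\partial f_\infty$ without creating a new arc (for instance, a vertex on a cycle with two pendant edges into $f_\infty$ has four incident outer edges and is a cut vertex, yet bounds only one arc of bounded faces). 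The delicate part is therefore to verify that the threshold ``$4$,'' together with the cut-vertex hypothesis, pins down exactly the ``two or more arcs in distinct blocks'' configuration and correctly handles these degenerate pendant/bridge cases.
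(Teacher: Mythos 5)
Your first half matches the paper: read the rotation at $u$, note that the dual edges $e_1^*,\dots,e_{d(u)}^*$ form a closed walk through the faces incident to $u$, and observe that deleting the occurrences of $f_\infty$ cuts this walk into arcs, so disconnection can only happen when $f_\infty$ occupies at least two runs. The gap is in how you rule out the arcs reconnecting elsewhere in $G^+$. Your premise that ``each bounded face lies inside a single block,'' and hence that every component of $G^+$ stays within one block, is false: a bounded face's boundary can span several blocks (a triangle with a pendant edge drawn inside it has one bounded face whose boundary uses both the cycle block and the bridge block), and a single bounded face can be $G^+$-adjacent to faces across edges of different blocks (attach a small triangle $T_2$ to a vertex of a triangle $T_1$ and draw it inside $T_1$: the annular face is adjacent in $G^+$ to the interior of $T_2$, while its boundary meets both blocks). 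So ``arcs in different runs lie in different blocks, hence in different components of $G^+$'' does not follow from block structure; the same issue undermines your contrapositive derivation of the first statement, since $G^+$ connected does not imply that a single block carries all bounded faces (the nested-triangle example again). What actually closes this step is topological, and it is what the paper does: the two $f_\infty$-to-$f_\infty$ subwalks $W_1^*,W_2^*$ of the closed dual walk are closed curves consisting of edges all bounding $u^*$; they partition the faces of $G^*$ into $\{u^*\}$, $A$ and $B$, so any dual path from $A$ to $B$ must pass through $f_\infty$ (showing the arcs lie in different components of $G^+$) and any primal $a$-to-$b$ path must pass through $u$ (showing $u$ is a cut vertex).

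On the bookkeeping you flag at the end: you are right that this is delicate, and your pendant-edge example (a cycle vertex with two pendant edges into $f_\infty$ has four incident edges on $\partial f_\infty$ and is a cut vertex, yet $u^+$ is a single face) is exactly the degenerate case that must be excluded. The paper handles it by counting only appearances of $f_\infty$ on $W^*$ as an endpoint of \emph{non-self-loop} dual edges, i.e.\ appearances that genuinely split the walk into two nonempty components; pendant edges into $f_\infty$ dualize to self-loops at $f_\infty^*$ and so do not count. Your proposal identifies the problem but leaves it unresolved, so as written both the separation step and the characterization in the second statement are incomplete.
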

\begin{proof} 
  Consider the edges $W$ incident to $u$ in clockwise order of their embedding around $u$: $e_1, e_2, \ldots, e_{d(u)}$. By definition of the dual, $e_1^*, e_2^*, \ldots, e_{d(u)}^*$ forms a closed walk $W^*$ in $G^*$.   $W^*$, although not necessarily simple, is such that every edge of $W^*$ bounds the face $u^*$.  Since $W^* = \partial u^*$, $G^*[u^*]$ is connected.  (Note that the same dual vertex may appear multiple times in $W^*$.)   

We prove the contrapositive of the first statement; we assume $G^+[u^+]$ is disconnected.
Note that $W^*\setminus f_\infty \subseteq G^+[u^+]$.  Therefore, 
$G^+[u^+]$ is disconnected if and only if $f_\infty$ appears at least twice on $W^*$ as endpoints of non-self-loops (in order to create two components of $W^*$); this proves the second part of the second statement of the lemma.  Let $W_1^*$ and $W_2^*$ be edge-disjoint  $f_\infty$-to-$f_\infty$ subpaths of $W^*$ that partition the edges of $W^*$.  In the plane, $W_1^*$ and $W_2^*$ are closed curves that together partition the faces of $G^*$ (and so also the vertices of $G$), into (at least) three sets: $\{u^*\}$, $A$ and $B$.

Consider an $\alpha$-to-$\beta$ path $P^*$ in $G^*$ for $\alpha \in A$ and $\beta \in B$.  Since every edge in $W^*$ bounds $u^*$, to go from a region bounded by $W_1^*$ to a region bounded by $W_2^*$, $P^*$ must visit $f_\infty$.  Therefore, removing $f_\infty$ disconnects $G^*$: $G^+$ is disconnected.

Likewise, consider an $a$-to-$b$ path $P$ in $G$ where $a^* = \alpha$ and $b^* = \beta$.  As curves in the plane, $P$ must cross $W_1^*$ and $W_2^*$.  However, the edges of $W_1$ and $W_2$ are all incident to $u$.  Therefore, removing $u$ disconnects $G$: $u$ is also a cut vertex, proving the second part of the second statement of the lemma.
\qed \end{proof}

\subsection{Relating em-width and treewidth} \label{sec:twbounds}

We first prove Theorem~\ref{thm:emwtwAll} for planar graphs with connected weak duals. 
\begin{lemma} \label{lma:connWeakDual}
If $G$ is plane graph with all faces bounded by at most $\ell$ vertices and with a connected weak dual, then \[\emw(G) \leq (\tw(G) + 2)\cdot \ell - 1.\]
\end{lemma}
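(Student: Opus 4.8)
The plan is to follow the strategy stated before the lemma: take an optimal tree decomposition of the weak dual $G^+$ and expand each dual bag into a primal bag by replacing every dual vertex (a bounded face of $G$) with the set of vertices on its boundary. Concretely, let $(T,\{Y_i\})$ be a tree decomposition of $G^+$ of width $\tw(G^+)\le \tw(G)+1$ (Equation~\ref{eqn:twweakdual}), so $|Y_i|\le \tw(G)+2$ for every node $i$. Keeping the same tree $T$, I would set
\[
X_i \;=\; \bigcup_{f^* \in Y_i} V(\partial f),
\]
the union of the boundaries of the bounded faces appearing in $Y_i$. Since each bounded face contributes at most $\ell$ vertices and each $Y_i$ contains at most $\tw(G)+2$ faces, every bag satisfies $|X_i|\le (\tw(G)+2)\cdot\ell$, which already yields the claimed width bound $(\tw(G)+2)\ell-1$. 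By Lemma~\ref{lem:completionWorks}, it then suffices to verify that $(T,\{X_i\})$ is a tree decomposition of the facial completion $\widetilde{G}$.

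The three tree-decomposition properties would be checked as follows. The em-decomposition property~4 is immediate: each bounded face $f$ is a vertex $f^*$ of $G^+$, hence lies in some $Y_i$, so $V(\partial f)\subseteq X_i$; this simultaneously covers every edge of $\widetilde{G}$ inside a facial clique and every edge and vertex of $G$ incident to a bounded face (properties~2 and~1 restricted to such edges and vertices). The crucial property is connectivity (property~3). For a vertex $u$ of $G$, one has $u\in X_i$ exactly when $Y_i$ meets $u^+$, the set of dual vertices whose faces have $u$ on their boundary. Lemma~\ref{newdualConn} guarantees that $G^+[u^+]$ is connected (since $G^+$ is connected), and a standard fact about tree decompositions says that the nodes whose bags intersect the vertex set of a connected subgraph induce a connected subtree of $T$. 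Hence $\{i : u\in X_i\}$ is connected, giving property~3.

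The main obstacle, and the only place the clean construction is incomplete, is vertices and edges of $G$ that lie on no bounded face at all: pendant edges and other bridges embedded so that the unbounded face touches both of their sides. Such an edge $uv$ has no dual edge in $G^+$ and is covered by no $X_i$, so properties~1 and~2 can fail for it. I would handle these by noting that the edges lying on no bounded face form a forest attached to the rest of $G$ at cut vertices, and then extending the decomposition outward: for each such edge $uv$ with $u$ already present in some bag, attach a new leaf to $T$ carrying the bag $\{u,v\}$, processing each tree from its attachment vertex. Each added bag has size two and attaching degree-one nodes preserves all tree-decomposition properties, so the width bound survives. (If $G$ bounds no face it is a forest with $\emw(G)=\tw(G)\le 1$, so I may assume at least one bounded face.) I expect this forest extension, together with the careful setup of the connectivity argument through Lemma~\ref{newdualConn}, to be the only genuine technical content; the face-to-boundary expansion and the width count are routine.
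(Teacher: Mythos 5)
Your proposal is correct and follows essentially the same route as the paper: expand each bag of an optimal tree decomposition of $G^+$ by replacing each dual vertex with its face's boundary, invoke Lemma~\ref{newdualConn} plus the connected-subgraph-gives-connected-subtree fact for property~3, and then separately attach small bags for the forest of edges bounding only $f_\infty$ (the paper's trees $F_1, F_2, \ldots$ hanging off the outer cycle). The only cosmetic difference is that you route the verification through Lemma~\ref{lem:completionWorks} and $\widetilde{G}$, while the paper checks the four em-decomposition properties directly.
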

\begin{proof}
First observe that since $G$ has a connected weak dual, by Lemma~\ref{newdualConn}, the subgraph $\partial f_\infty$ can be partitioned into a cycle $C$ and a set of trees $F_1, F_2, \ldots$ where tree $F_i$ has a single vertex $v_i$ in common with $C$.

We prove the lemma constructively by converting a minimum-width tree decomposition $(T^+, \mathcal{X}^+)$ of $G^+$ into an em-decomposition $(T, \mathcal{X})$ of $G$ in two steps:
\begin{enumerate}
\item Set $T = T^+$ and for each $X^+ \in \mathcal{X}^+$, create a bag $X \in \mathcal{X}$ such that $X$ contains all vertices of $G$ that are on the boundary of faces corresponding to vertices of $X^+$; namely $X = \{u\ |\ u \in \partial \alpha\ s.t.\ \alpha^* \in X^+\}$.
\item The vertices of $G$ that are not yet represented by $(T, \mathcal{X})$ are those vertices that are {\em only} in $\partial f_\infty$ (that is, not on the boundary of any finite face of $G$).  These are exactly the vertices of $F_1, F_2, \ldots$ not in $C$.  For each $i$, we add a 
minimum-width tree decomposition $(T_i,\mathcal{X}_i)$ of $F_i$ to $(T,\mathcal{X})$ by connecting a node of $T_i$ to a node of $T$ where both bags contain $v_i$.
\end{enumerate}

\begin{claim}
  $(T,\mathcal{X})$ is an em-decomposition of $G$.
\end{claim}
\begin{proof} 
  Properties~1 and~2 of em-decompositions hold since every vertex and edge of $G$ is either on the boundary of a finite face (and so included in a bag in step~1) or not (and so included in a bag in step~2.)

  Property~4 of em-decompositions holds since every non-outer face $\alpha$ is in some bag of $\mathcal{X}^+$, and so all the vertices of $\partial \alpha$ are included in a bag of $\mathcal{X}$ in step~1.
  
 All that remains is to illustrate the connectivity requirement (property~3).  Consider a vertex $u$ added to a bag in step~1.  Since $u$ is on the boundary of a finite face, $u$ will be added to the bags of $\mathcal{X}$ that correspond to bags of $\mathcal{X}^+$ that contain finite primal faces for which $u$ is on the boundary.  Specifically, $u$ is in the set of bags $\{X \in {\cal X} \ : \ u \in \partial \alpha\ s.t.\ \alpha^* \in X^+\} = \{X \in {\cal X}\ : \ \alpha^* \in X^+\ s.t.\ \alpha^* \in G^+[u^+] \} $.  By Lemma~\ref{newdualConn}, since $G^+$ is connected, $G^+[u^+]$ is connected; since $(T^+, \mathcal{X}^+)$ satisfies the connectivity requirement, the bags of ${\cal X}^+$ that contain vertices of $G^+[u^+]$ are also connected in $T^+$.  Now consider the vertices added to bags in step~2. The bags of $(T_i, {\cal X}_i)$ satisfy property~3 and only vertices connecting the trees of $\partial f_\infty$ to the rest of the graph (e.g.\ $v_i \in F_i$) appear in both bags added in step~1 and step~2.  However, by construction, these bags are connected in the last part of the construction of step~2.
\qed \end{proof}
\begin{claim} 
  The width of $(T,\mathcal{X})$ is at most $(\tw(G) + 2)\cdot \ell - 1$.
\end{claim}
\begin{proof}
  Each bag of $(T^+, \mathcal{X}^+)$ has size at most $\tw(G^+) + 1$ which is $ \leq tw(G)+2$ by Equation~(\ref{eqn:twweakdual}).  The bags of $(T,\mathcal{X})$ derived from $(T^+, \mathcal{X}^+)$ are bigger by a factor of at most $\ell$, the bound on the size of the faces of $G$, and so have size $ \leq (tw(G)+2)\cdot \ell$.  The bags added to $(T,\mathcal{X})$ that correspond to the trees $F_1, F_2, \ldots$ in $\partial f_\infty$ each have size at most 2 since trees have tree-width 1.
\qed \end{proof}
\noindent This completes the proof of Lemma~\ref{lma:connWeakDual}.
\qed \end{proof}

We can now generalize this result to all plane graphs (eliminating the condition that weak dual must be connected).

\begin{proof}[of Theorem~\ref{thm:emwtwAll}]
Consider a ``pseudo'' block decomposition of $G$.  
A block decomposition decomposes the edges of $G$ into biconnected components that are connected to each other in a tree structure via cut vertices of $G$.  We, rather, use a coarser decomposition $\cal B$ cutting $G$ at only the cut vertices incident to $f_{\infty}$ more than once. That is, each $B \in {\cal B}$ is a subgraph of $G$, all boundary edges of which are on $f_\infty$, two components of ${\cal B}$ have a single cut vertex in common and the components of $\cal B$ are connected in a tree structure.

By construction, each $B \in {\cal B}$ is either an edge, or is a graph whose outer face has a cycle for a boundary, so by Lemma \ref{newdualConn}, these subgraphs have connected weak duals. 
 By Lemma~\ref{lma:connWeakDual}, $B$ has an em-decomposition $(T_B, {\cal X}_B)$ of width at most $(\tw(G) + 2)\cdot \ell - 1$.  

We connect these em-decompositions as follows.  For two subgraphs $B_1$ and $B_2$ of ${\cal B}$ that share a cut vertex $s$, connect a bag of $(T_{B_1}, {\cal X}_{B_1})$ that contains vertex $s$ to a bag of $(T_{B_2}, {\cal X}_{B_2})$ that contains vertex $s$.  This maintains properties~1,~2, and~4 of em-decompositions, since  $(T_{B_1}, {\cal X}_{B_1})$ and $(T_{B_2}, {\cal X}_{B_2})$ are valid em-decompositions.  Property~3 is achieved for $B_1 \cup B_2$ since bags containing $s$ are connected.  The width of the union of the decompositions is not increased.  Repeating for each pair of subgraphs of $\cal B$ that share a vertex results in a valid em-decomposition since the subgraphs are connected in a tree structure.
\qed \end{proof}

We can likewise relate the em-width to the outerplanarity of a plane graph by way of weak duals. The proof technique is similar, so we defer the proof of Theorem \ref{thm:emw3kl} to the Appendix.

\section{Matchings in planar graphs}\label{sec:match}
In Section~\ref{sec:algorithm}, we discuss an algorithm for computing em-width. The algorithm recurses on smaller graphs created by contracting a matching in the current instance. To be efficient, our algorithm requires that each instance have a large matching. 
Not all planar graphs have matchings of size $\Omega(n)$; $K_{2,r}$ is a counterexample for large $r$. We argue that these large $K_{2,r}$ are actually the only obstructions to large matchings in planar graphs with minimum degree $2$, which will be sufficient for the purposes of our algorithm. 

Call a set of $r$ degree-$2$ vertices with the same set of neighbors an ``$r$-family.'' Note that an $r$-family and its common neighbors induce a $K_{2,r}$ subgraph, possibly with an additional edge between the vertices in the size-$2$ partite set.

\begin{theorem} \label{noK2r}
If $G$ is a connected planar graph with minimum degree 2 and no $r$-family ($r \geq 3)$, then $G$ has a matching of size at least $\frac{n}{12r-3}$. 
\end{theorem}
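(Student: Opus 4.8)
The plan is to start from a maximal matching and bound the number of \emph{unmatched} vertices, since a matching that leaves few vertices uncovered is automatically large. Let $M$ be any maximal matching, let $V(M)$ denote the $2|M|$ matched vertices, and let $U = V \setminus V(M)$ be the unmatched set, so $|U| = n - 2|M|$. By maximality $U$ is an independent set, hence every edge incident to a vertex of $U$ goes to $V(M)$; combined with the minimum-degree-$2$ hypothesis, each $u \in U$ sends at least two edges into $V(M)$. The whole argument reduces to showing $|U| \le c(r)\cdot|M|$ for an appropriate $c(r) = O(r)$, which rearranges to a lower bound on $|M|$ of the form $n/(c(r)+2)$.

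I would split $U$ according to degree. Let $U_{\ge 3}$ be the unmatched vertices of degree at least $3$ and $U_2$ the unmatched vertices of degree exactly $2$. For $U_{\ge 3}$, the bipartite subgraph between $U_{\ge 3}$ and $V(M)$ is planar, so it has at most $2(|U_{\ge 3}| + 2|M|) - 4$ edges; since each vertex of $U_{\ge 3}$ contributes at least $3$ edges, this forces $|U_{\ge 3}| \le 4|M| - 4$. This part is routine and does not use the $r$-family hypothesis.

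The crux is bounding $|U_2|$, and this is exactly where planarity and the absence of an $r$-family are both essential: a naive bipartite edge count is useless for degree-$2$ vertices, which are too ``cheap'' to be limited by the bound $|E| \le 2|V|-4$. Instead I would \emph{suppress} each $u \in U_2$: such a $u$ has exactly two neighbors $\{a,b\} \subseteq V(M)$, so the path $a$--$u$--$b$ can be contracted to an edge $ab$. This produces a planar multigraph $G'$ on the vertex set $V(M)$ whose edges are in bijection with $U_2$, so $|E(G')| = |U_2|$. If some edge of $G'$ had multiplicity $r$, the corresponding $r$ vertices of $U_2$ would be degree-$2$ vertices with identical neighborhoods, i.e.\ an $r$-family; hence the no-$r$-family hypothesis caps every edge multiplicity at $r-1$. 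A planar multigraph on $2|M|$ vertices with edge multiplicity at most $r-1$ (and no loops, since a degree-$2$ vertex has two \emph{distinct} neighbors) has at most $(r-1)(3\cdot 2|M| - 6)$ edges, giving $|U_2| \le (r-1)(6|M|-6)$.

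Combining the two bounds gives $n - 2|M| = |U_2| + |U_{\ge 3}| \le (r-1)(6|M|-6) + (4|M|-4)$, and solving for $|M|$ yields a lower bound of the form $n/O(r)$, matching the shape of the claimed $n/(12r-3)$. I expect the only genuine obstacles to be bookkeeping rather than ideas: the planar multigraph edge bound $(r-1)(3V-6)$ degrades when $V = 2|M|$ is very small, and assembling everything into a single planar multigraph uses the connectivity hypothesis. Tracking these small-case losses and the exact coefficients is presumably what converts the clean estimate into the stated $n/(12r-3)$; the conceptual heart is the suppression of degree-$2$ unmatched vertices together with the multiplicity cap supplied by the no-$r$-family assumption.
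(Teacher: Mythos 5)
Your proposal is correct, and it takes a genuinely different --- and in fact sharper --- route than the paper. The paper's proof splits on the fraction of degree-$2$ vertices: when that fraction is small it augments $G$ to a minimum-degree-$3$ planar supergraph and invokes Nishizeki's theorem; when it is large it further splits the degree-$2$ vertices into ``lonely'' and ``social,'' matches the lonely ones inside the paths and cycles they induce, and handles the social ones by contracting to a multigraph (where the no-$r$-family hypothesis caps edge multiplicities) and extracting a matching via a Robbins-type orientation with at most one sink per component. You instead bound the independent set $U$ of vertices missed by an arbitrary maximal matching $M$: the degree-$\ge 3$ part of $U$ via the bipartite planar bound $2v-4$, and the degree-$2$ part via suppression into a loopless planar multigraph on $V(M)$ with multiplicity at most $r-1$ (this is exactly where the no-$r$-family hypothesis enters), giving $|U| \le 4|M| + 6(r-1)|M|$ and hence $|M| \ge n/(6r)$, which is strictly stronger than the stated $n/(12r-3)$ for every $r \ge 3$. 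Your argument avoids both Nishizeki's and Robbins' theorems, needs no case analysis, and --- usefully for the paper's algorithm, which contracts a maximal matching and argues via $2$-approximation --- directly shows that \emph{every} maximal matching is large. The small-case caveats you flag are harmless: if $U_{\ge 3}\ne\emptyset$ the bipartite graph automatically has at least three vertices, and the crude forms $|U_{\ge 3}| \le 4|M|$ and $|U_2| \le 6(r-1)|M|$ remain valid even when $2|M|=2$, since a simple graph on two vertices has at most one edge and $r-1 \le 6(r-1)$.
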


Our approach is as follows. A theorem of Nishizeki \cite{Nishizeki79} says that planar graphs of minimum degree $3$ have large matchings. To extend to degree $2$ graphs with no $K_{2,r}$ we divide into cases. Let $G$ be our input planar graph. If there are few degree $2$ vertices, we can find a supergraph $H$, of $G$ which has minimum degree $3$, and such that every large matching of $H$ contains a large matching of $G$. Otherwise, there are many degree $2$ vertices. Either, many of them are adjacent to each other and those paths and cycles have large matchings, or many are adjacent to only high-degree vertices and the lack of $K_{2,r}$ subgraphs guarantees many high degree vertices as well, from which we find a matching. The full details of the proof are deferred to the Appendix.

Setting $r = 3$ in Theorem~\ref{noK2r}, we get the following special case:
\begin{corollary} \label{cor:noK23}
If $G$ is a connected planar graph with minimum degree 2 and no $3$-family, then $G$ has a matching of size at least $\frac{n}{33}$.
\end{corollary}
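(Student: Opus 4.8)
The plan is to obtain this statement as an immediate specialization of Theorem~\ref{noK2r}. First I would observe that the value $r = 3$ satisfies the hypothesis $r \geq 3$ demanded by that theorem, and that the corollary's assumption of ``no $3$-family'' is exactly the instance ``no $r$-family'' with $r = 3$. Consequently every hypothesis required by Theorem~\ref{noK2r}---connectivity, planarity, minimum degree $2$, and the absence of an $r$-family at $r = 3$---is met by any graph $G$ satisfying the hypotheses of the corollary, so the theorem applies verbatim.

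Next I would invoke Theorem~\ref{noK2r} directly to conclude that $G$ has a matching of size at least $\frac{n}{12r - 3}$, and then substitute $r = 3$ to evaluate the denominator: $12 \cdot 3 - 3 = 33$. This yields the claimed bound of $\frac{n}{33}$, completing the argument.

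There is no genuine obstacle here beyond the proof of Theorem~\ref{noK2r} itself, which is assumed; the corollary is a pure substitution. The only point meriting a moment's care is confirming that the corollary's hypothesis (no $3$-family) is precisely the condition Theorem~\ref{noK2r} imposes at $r = 3$---neither strictly stronger nor strictly weaker---so that no auxiliary argument is needed to bridge the two statements. Once that alignment is noted, the bound follows by evaluating the denominator $12r - 3$ at $r = 3$.
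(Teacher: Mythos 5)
Your proposal is correct and is exactly the paper's argument: the corollary is obtained by setting $r=3$ in Theorem~\ref{noK2r} and evaluating $12r-3=33$. Nothing further is needed.
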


We strengthen this result (and get a slightly less tight bound) by allowing $r$-families along as they are not {\em nicely embedded} in a given plane graph.  Recall that an $r$-family refers to the vertices of degree 2 (that have common neighbors).

\begin{definition}[Nicely embedded $r$-family]
  Let $H$ be the subgraph of a plane graph given by the edges incident to the vertices of an $r$-family $R$.  If for every finite face $f$ of $H$, $\partial f$ (as a cycle in $G$) does not strictly enclose any vertices of $G$, then $R$ is nicely embedded.  (See Figure \ref{rFamiliesPic}.)  If $\partial f$ strictly encloses a set of vertices $O$ of $G$, we call the subgraph $G[O]$ an obstruction to the nice embedding of $R$. 
\end{definition}

\begin{figure}[h] 
\includegraphics[width=\textwidth]{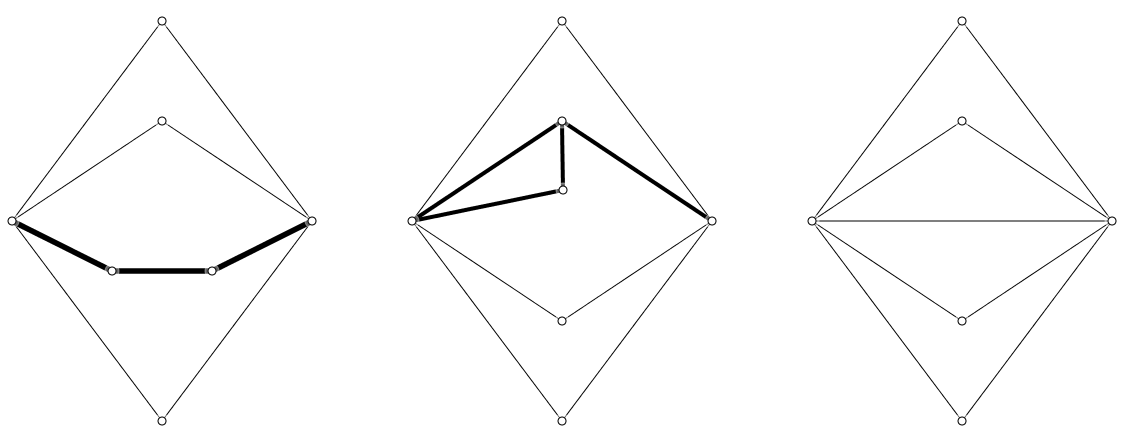}
\caption{Two $3$-families (left, center) which are not nicely embedded and a nicely embedded $4$-family (right). The obstructions to the nice embeddings of the 3-families are given by bold edges. Note that an edge between the common neighbors does not violate being nicely embedded.}
\label{rFamiliesPic}
\end{figure} 

\begin{theorem} \label{noNiceK23}
If $G$ is a connected plane graph of minimum degree $2$ with no nicely embedded $r$-family for $r \ge 3$, then $G$ has a matching of size at least $\frac{n}{37}$.
\end{theorem}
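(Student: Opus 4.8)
The plan is to bootstrap from Corollary~\ref{cor:noK23}, which already handles graphs with no $3$-family, and to pay for the vertices that make that bound fail---the ``excess'' degree-$2$ vertices of large families---using the obstruction vertices that the hypothesis forces to exist. First I would dispose of the easy case: if $G$ has no $3$-family at all, Corollary~\ref{cor:noK23} gives a matching of size $n/33 \ge n/37$ and we are done. So assume $G$ contains families of size $\ge 3$ and fix a \emph{maximal} family $R_{ab}$, meaning all degree-$2$ vertices whose neighbor set is exactly $\{a,b\}$; say $|R_{ab}| = m$. Drawing its $m$ connecting paths $a\,v_i\,b$ in the embedded (say top-to-bottom) order, the subgraph $H$ of the definition is a generalized theta graph whose $m-1$ bounded faces are the ``lenses'' $L_1,\dots,L_{m-1}$ between consecutive paths.

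The structural engine is the following observation about these lenses. Any three \emph{non-consecutive} paths already enclose an intermediate $v_j$, so the only triples whose sub-family could be nicely embedded are the consecutive ones $\{v_i,v_{i+1},v_{i+2}\}$, whose two faces are exactly $L_i$ and $L_{i+1}$. Hence, for a maximal family, the hypothesis ``no nicely embedded $3$-family'' is equivalent to the statement that \emph{no two consecutive lenses are empty}, and a sequence of $m-1$ lenses with no two consecutive empties contains at least $\lfloor (m-1)/2\rfloor$ non-empty lenses. Moreover, since the two $v$'s on a lens boundary already have degree $2$, a lens enclosing a single vertex $w$ would force $N(w)\subseteq\{a,b\}$ and hence (by minimum degree $2$) $w\in R_{ab}$, contradicting maximality; so every non-empty lens encloses at least two \emph{obstruction} vertices, none of which lie in $R_{ab}$. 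Summing, the obstruction vertices attached to $R_{ab}$ number at least $2\lfloor (m-1)/2\rfloor \ge m/2$.

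With this in hand I would form $G'$ from $G$ by deleting, in each maximal family, all but two of its degree-$2$ vertices. Deleting a degree-$2$ vertex touches only $a$ and $b$, and since a family of size $\ge 3$ gives $a,b$ degree $\ge 3$, the retained two paths keep minimum degree $2$; connectivity is preserved and $G'$ has no $3$-family, so Corollary~\ref{cor:noK23} applies to each component. Because the deleted vertices attach only to $\{a,b\}$, they are essentially unmatchable---any matching of $G$ reroutes to one of $G'$ of equal size---so the maximum matching sizes satisfy $\nu(G)=\nu(G')\ge |V(G')|/33$. The counting step then compares $|V(G')|$ with $n$: the deleted excess is $\sum_j (m_j-2)$, while the obstruction vertices surviving in $G'$ number at least $\sum_j m_j/2$, so each deleted vertex is charged to a retained obstruction vertex and $|V(G')|$ is a constant fraction of $n$.

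The step I expect to be the main obstacle is precisely this global charging, for two reasons. First, obstruction regions can \emph{nest}: the vertices enclosed in a lens may themselves host degree-$2$ families (a lens whose enclosed subgraph is a star hides a smaller family), so the obstruction vertices of one family can coincide with the deleted family vertices of another and cannot naively be counted once. Resolving this pushes the argument toward an induction on $|V(G)|$ that peels off an innermost obstruction region across the size-$2$ separator $\{a,b\}$, matches inside it, and recurses on the outside. Second, the vertices $a,b$ are shared by all lenses of a family (and possibly across families), so when combining inside and outside matchings one must avoid double-using them; the safe route is to match enclosed vertices internally wherever possible and reserve $a,b$ for at most one lens each. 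Carrying this induction through while maintaining minimum degree $2$, connectivity, and the no-nicely-embedded property in each sub-instance, and balancing the recurrence to yield the stated constant $1/37$, is where the real work lies.
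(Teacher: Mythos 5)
Your structural analysis of a single maximal family is sound and essentially matches the paper's: only consecutive triples of the family's paths can avoid enclosing a family member, so no two consecutive lenses may be empty, giving at least $\lceil (m-1)/2\rceil \ge m/4$ obstructed lenses; and each obstruction must contain an edge, since a single enclosed vertex of degree $\ge 2$ could only attach to $a$ and $b$ and would join the family, contradicting maximality. The reduction in your second half (delete all but two members of each maximal family, observe the result is $3$-family-free with minimum degree $2$, apply Corollary~\ref{cor:noK23}) is also exactly the paper's. The genuine gap is in how you combine them, and you identify it yourself: the global charge of deleted family vertices against retained obstruction vertices fails when families nest inside one another's lenses, and the induction you sketch to repair it is never carried out. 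As written, the proposal does not yield the bound.

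The missing idea is much lighter than an induction: a threshold case split on the total number $p$ of vertices lying in maximal $r$-families. If $p \le \frac{4n}{37}$, perform your deletion; the resulting graph has at least $n-p$ vertices and Corollary~\ref{cor:noK23} already gives a matching of size $\frac{n-p}{33} \ge \frac{n}{37}$, with no charging against obstructions required at all. If $p \ge \frac{4n}{37}$, do not delete anything; instead harvest the matching directly from the obstructions: there are at least $p/4$ vertex-disjoint obstructions, each contains an edge by your two-vertex argument, and one edge per obstruction is a matching of size $\frac{p}{4} \ge \frac{n}{37}$. This case analysis lets the two halves of your argument each carry the bound alone in its own regime, so the nesting problem never has to be confronted.
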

\begin{proof}
We divide into two cases based on the number of vertices which belong to (maximal) $r$-families.  By assumption, none of these $r$-families are nicely embedded.  Let there be $p$ such vertices.

\noindent {\bf Case 1:} $\mathbf{p \geq \frac{4n}{37}}$\\ 
Consider the subgraph $H$ given by the edges incident to a maximal $r$-family of $G$.  Since no $3$-family of $G$ can be nicely embedded, for every two finite faces $f,g$ of $H$ that share edges, there must be an obstruction in $G$ enclosed by at least one of $\partial f,\partial g$ (since the edges of $\partial f$ and $\partial g$ witness a 3-family that also must not be nicely embedded.  That is, there are obstructions in $G$ embedded in at least half the finite faces of $H$.  There must be at least $\lceil \frac{r-1}{2} \rceil$ obstruction for this $r$ family.  Since $\lceil \frac{r-1}{2} \rceil \ge \frac{r}{4}$, in $G$ there are at least $\frac{p}{4}$ obstructions.

Since $H$ corresponds to a maximal $r$-family, and $G$ has minimum degree 2, each obstruction must contain an edge.  Since the obstructions are vertex disjoint, taking one edge from each obstructions gives a matching of size at least $\frac{p}{4} \geq \frac{n}{37}$.

\noindent {\bf Case 2:} $\mathbf{p \leq \frac{4n}{37}}$\\
For each maximal $r$-family ($r \geq 3$), delete all but two of the degree-$2$ vertices. Call this new graph $G'$. $G'$ is a $3$-family-free graph with at least $n-p$ vertices, so by Corollary~\ref{cor:noK23}, there is a matching of size $\frac{n-p}{33} \geq \frac{n}{37}$.
\qed \end{proof}

\section{Calculating em-width} \label{sec:algorithm}

In this section we develop a fixed-parameter tractable algorithm for deciding whether a planar graph has em-width at most $k$ (and constructing the embedded tree decomposition if it exists). 
We adapt an algorithm of Bodlaender for deciding whether a graph has  treewidth at most $k$ (and constructing the tree decomposition if it exists)~\cite{Bodlaender93}.  Our algorithm, like Bodlaendar's, runs in linear time in the size of the graph (ignoring the run-time dependence on $k$).

One option is to simply do the following: for input plane graph $G$, construct the facial completion $\widetilde{G}$ and apply Bodlaender's treewidth algorithm to $\widetilde{G}$.  The result (by Lemma~\ref{lem:completionWorks}) is an em-decomposition for $G$. (Indeed, we take this approach for one of the subroutines of our algorithm.)

However, as one of the motivations for studying em-width is to better understand how embeddings affect decompositions, it is instructive to see what role the embedding can play in the explicit execution of the algorithm, rather than simply using a black box. Moreover, our algorithm (by exploiting planarity and the embedding of the graph) is simpler conceptually than Bodlaender's. We therefore include our version as evidence of our claim in the introduction that explicitly considering the embedding proves useful in algorithm design.

Further, our algorithm is faster than  Bodlaender's algorithm by a polynomial factor of $k$ as a result of a more efficient recursion. We defer the comparison to the Appendix.  

\subsection{Core Algorithm Description}
Like Bodlaender's algorithm for treewidth, we will contract along matchings to decrease the graph size. Following Theorem \ref{noNiceK23}, we will find a subgraph of our current instance that has no nicely embedded $3$-family and minimum degree $2$. This gives us a large matching, which we contract. When we contract along this matching we may create multi-edges. We cannot delete these repeated edges automatically, as they could form the border of a face. However if the multi-edges have nothing embedded between them, one copy can be deleted without affecting the faces or the em-width. After deleting any multi-edges of this type, we recurse and find an em-decomposition of this subgraph, this decomposition can be easily converted into a (non-optimal) em-decomposition for the current instance. Finally we improves the non-optimal em-decomposition into one of the desired width. A more precise description follows.
 
\noindent Given a plane graph $G$:
\begin{enumerate}
	\item Remove degree-$1$ vertices.
	\item Identify all nicely embedded maximal $r$-families ($r \geq 3)$.  For each $r$-family $F$, order the vertices of $F$ as $f_1, f_2, \ldots, f_r$ such that $f_i$ and $f_{i+1}$ appear on the boundary of a common face.  Delete $f_2, \ldots, f_{r-1}$ from the graph.
	\item  Find a maximal matching and contract the edges of the matching.
	\item  \label{GPrimeFinished} For every pair of parallel edges that form the boundary of a face, remove one edge.
	\item Recurse on the resulting graph, $G'$. Let $T$ be the em-decomposition returned by the recursive call.
	\item For every vertex $u$ in $G'$ which was a contraction of $(v,w)$ in $G$, replace appearances of $u$ in $T$ with $v$ and $w$.
	\item For every $r$-family $F$, there must be a node $x$ in $T$ whose bag contains $f_1,f_r$ and the common neighbors of $f_1$ and $f_r$, $a$ and $b$.  Attach to $x$ a path of nodes whose bags are\\ $\{a,b,f_1,f_2,f_r\},\{a,b,f_2,f_3,f_r\},\{a,b,f_3,f_4,f_r\},\ldots,\{a,b,f_{r-2},f_{r-1},f_r\}$.
	\item For each deleted degree-$1$ vertex, $u$, if $u$ is incident to some interior face, create a bag with only the boundary of that face, and attach it to the bag that contained the boundary of the corresponding face in $G'$. If $u$ was not incident to an interior face, create a bag with it and its neighbor and attach it to any bag containing the neighbor.
	\item Run the decomposition improvement algorithm (see Section \ref{improveAlg}) on $T$ to get a decomposition of width $k$ (if one exists).
\end{enumerate}

\begin{lemma}
$\emw(G') \leq \emw(G)$.
\end{lemma}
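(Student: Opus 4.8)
The goal is to show that each transformation taking $G$ to $G'$ (steps~1 through~4 of the core algorithm) does not increase the em-width. Since $G'$ is obtained from $G$ by a sequence of operations—removing degree-$1$ vertices, deleting interior vertices of nicely embedded $r$-families, contracting a matching, and removing one of a pair of parallel edges bounding a face—it suffices to argue that each operation individually yields a graph whose em-width is at most that of the input. The plan is to handle these four operations in turn, in each case taking an em-decomposition of the graph before the operation and exhibiting one of the graph after the operation whose width is no larger.

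For the operations that \emph{remove} structure—deleting a degree-$1$ vertex (step~1), deleting the middle vertices $f_2,\dots,f_{r-1}$ of an $r$-family (step~2), and deleting a redundant parallel edge bounding a face (step~4)—the key observation is that each produces a plane graph $G'$ whose vertex set, edge set, and set of bounded-face boundaries are all contained in those of $G$, once we account for how faces merge. I would argue via Lemma~\ref{lem:completionWorks}: it is enough to show $\widetilde{G'}$ is (isomorphic to) a subgraph of $\widetilde{G}$ on the surviving vertices, so that $\tw(\widetilde{G'}) \le \tw(\widetilde{G})$ and hence $\emw(G') \le \emw(G)$. Removing a degree-$1$ vertex cannot shrink any bounded face (it lies only on the outer face or on a face whose boundary cycle is unaffected), so every clique of $\widetilde{G'}$ is already a clique of $\widetilde{G}$. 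Deleting a redundant parallel edge merges two faces that enclosed nothing between them, so the new face boundary is a subset of the union of two old boundaries, both of which were already cliques in $\widetilde{G}$; the new clique is therefore contained in an old bag.

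For the contraction step~3, the argument is the standard fact that contraction does not increase treewidth, applied to the facial completion. Contracting a matching edge $(v,w)$ of $G$ to a vertex $u$ of $G'$ corresponds to contracting the same edge in $\widetilde{G}$; I would check that $\widetilde{G'}$ is a minor of $\widetilde{G}$ (each bounded face of $G'$ arises from bounded faces of $G$ whose boundary vertices, after identifying $v$ with $w$, form a clique that is the image of a clique in $\widetilde{G}$ under the contraction), and then invoke that treewidth is minor-monotone together with Lemma~\ref{lem:completionWorks}. Equivalently and more constructively, I would take a minimum-width em-decomposition of $G$ and, in every bag, replace the pair $v,w$ by the single vertex $u$; this can only decrease bag sizes, and the three tree-decomposition properties plus property~4 are preserved because the subtree for $u$ is the union of the (overlapping, since $(v,w)$ is an edge and shares a bag) subtrees for $v$ and $w$.

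The main obstacle I anticipate is the $r$-family deletion (step~2), because unlike the other operations it is not a clean minor or subgraph relationship: deleting the interior vertices $f_2,\dots,f_{r-1}$ \emph{merges} the faces of the $K_{2,r}$-like gadget into fewer, larger faces of $G'$, and I must verify that the resulting merged face boundaries do not enclose anything new and are still captured by cliques present in $\widetilde{G}$. This is exactly where the \emph{nicely embedded} hypothesis is needed: because $R$ is nicely embedded, no interior face of the gadget strictly encloses any vertex of $G$, so deleting the $f_i$ merges faces without trapping vertices, and the new boundary (roughly $a,b,f_1,f_r$ together with whatever lay outside) forms a clique already dominated by the old facial cliques in $\widetilde{G}$. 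I would make this precise by describing the merged face explicitly and checking its boundary vertex set is contained in $V(\partial g)$ for some bounded face $g$ of $G$, or is a subset of a union of old facial cliques sharing a common bag; from there the subgraph containment $\widetilde{G'} \subseteq \widetilde{G}$ (up to the identification of surviving vertices) follows and the bound $\emw(G') \le \emw(G)$ is immediate.
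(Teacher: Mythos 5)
Your handling of steps~1, 3, and~4 is fine and essentially matches the paper's direct bag-surgery argument (the paper simply deletes degree-$1$ vertices from bags and replaces each matched pair by its contraction; your facial-completion/minor phrasing is an equivalent route). The genuine gap is in step~2, exactly where you flagged difficulty. After deleting $f_2,\dots,f_{r-1}$, the $r-1$ quadrilateral faces $a f_i b f_{i+1}$ merge into a single bounded face of $G'$ with boundary $\{a,b,f_1,f_r\}$, so $f_1f_r$ is an edge of $\widetilde{G'}$. But $f_1$ and $f_r$ need not lie on any common bounded face of $G$ when $r\ge 3$ (the only faces of $G$ containing $f_1$ are $af_1bf_2$ and the face on the far side of the path $af_1b$, which in general is bounded by other parts of $G$ rather than by $f_r$). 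Hence $\{a,b,f_1,f_r\}$ is not a clique of $\widetilde{G}$, the containment $\widetilde{G'}\subseteq\widetilde{G}$ you want fails, and your fallback --- that the two old facial cliques $\{a,b,f_1,f_2\}$ and $\{a,b,f_{r-1},f_r\}$ ``share a common bag'' --- is not something you can assume of an arbitrary minimum-width tree decomposition of $\widetilde{G}$.

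The missing idea is a relabeling (equivalently, a contraction) rather than a restriction: in every bag of the em-decomposition of $G$, replace each of $f_2,\dots,f_{r-1}$ by $f_1$. Since consecutive $f_i,f_{i+1}$ share a bag (they bound a common face), the union of their subtrees is connected, so property~3 holds for the relabeled $f_1$; and the bag that contained $\{a,f_{r-1},b,f_r\}$ now contains $\{a,f_1,b,f_r\}$, which is precisely the boundary of the merged face, giving property~4. In your language, $\widetilde{G'}$ is a subgraph of the \emph{minor} of $\widetilde{G}$ obtained by contracting the connected set $\{f_1,\dots,f_{r-1}\}$ onto $f_1$ --- a minor relationship after all, just not the identity-on-survivors subgraph relationship you proposed. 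With that correction the rest of your argument goes through and, via Lemma~\ref{lem:completionWorks}, yields the claimed bound.
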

\begin{proof}
It is enough to show that we can convert any em-decomposition of $G$ into an em-decomposition of $G'$ without increasing the width. We can accomplish this conversion as follows: delete appearances of the degree-$1$ vertices from the decomposition. For contracted edges $(u,v)$ replace appearances of $u$ and $v$ with the combined vertex in $G'$. 

Finally, for an $r$-family $f_1, f_2, \ldots, f_r$ with common neighbors $a$ and $b$, replace appearances in bags of the em-decomposition of $r_i$ for $r = 2, \ldots, r-1$ with $r_1$.  Since $r_i$ and $r_{i+1}$ appear in some common bag (since they bound a common face), property 3 of em-decompositions is met.  Since $a,f_{r-1},b,f_r$ formed a face in $G$, $\{a,f_{r-1},b,f_r\}$ will be in a common bag of the new decompositions, meeting property 4 of em-decompositions.

Thus this is a valid em-decomposition for $G'$. Since, we have not increased the size of any bag, the width did not increase.
\qed \end{proof}
Therefore, when $T$ is returned from the recursive call it has bags of size at most $k+1$ (where $k$ is the em-width of $G$). Expanding the matching could double the bag size to $2k+2$. All of the remaining bags that are added are boundaries of faces, with at most one additional vertex added, thus are size at most $\ell + 1$.  Thus the width of the decomposition $T$ is at most $\max\{\ell+1, 2k+2\}-1$ which is at most $2k+1$ since $\ell \leq k + 1$.

We show how to improve the width of this decomposition (if it exists) from $2k+1$ to $k$ in Section \ref{improveAlg}.

\subsection{Run-time analysis}

To properly discuss running time we will require the following lemma:
\begin{lemma} \label{edgesOn}
	The number of edges of $G'$ (counting multi-edges) is linear in the number of vertices of $G'$.
\end{lemma}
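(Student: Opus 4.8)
The plan is to bound the number of edges in $G'$ by carefully accounting for where multi-edges can arise during the construction of $G'$. The key observation is that $G'$ is obtained from the original plane graph $G$ by the operations in steps~1--4: removing degree-$1$ vertices, deleting interior vertices of nicely embedded $r$-families, contracting a matching, and removing one copy of every pair of parallel edges bounding a face. Contraction is the only operation that can create multi-edges, and step~4 explicitly removes the ``empty'' ones. So the only multi-edges that survive in $G'$ are pairs (or groups) of parallel edges that enclose some embedded structure between them.

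First I would recall that for any \emph{simple} planar graph on $m$ vertices, the number of edges is at most $3m-6$ by Euler's formula, so the real work is in controlling multiplicity. I would argue that if two parallel edges $uv$ both remain in $G'$, then by step~4 they do not bound a common face, meaning the region between them must contain at least one vertex (or a nontrivial piece of the graph) in its interior. The plan is to charge each surviving extra copy of a parallel edge to a distinct vertex or face lying strictly between the parallel edges, using the embedding. Since the plane graph $G'$ still has a valid embedding, these enclosed regions are disjoint (or nested in a tree-like fashion), which lets me set up an injective charging scheme from ``excess'' parallel edges into the vertices and faces of $G'$.

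The cleanest route is probably to apply Euler's formula directly to the plane (multi-)graph $G'$: if $G'$ has $n'$ vertices, $m'$ edges, and $F'$ faces, then $n' - m' + F' = 2$ (assuming connectivity, which holds after step~1 removes degree-$1$ vertices and the graph stays connected under the remaining operations). The point is that after step~4 every face is bounded by at least two edges only in the degenerate parallel case, but generically each face has boundary length at least $2$; more carefully, because step~4 has eliminated all faces bounded by exactly two parallel edges, every bounded face of $G'$ has a boundary of length at least $3$, or is bounded by a genuine digon whose interior is nonempty. Standard double counting of edge-face incidences, $2m' \ge 3F'$ (or $2m' \ge 2F'$ in the worst case with the enclosed-digon exception absorbed into a constant), then yields $m' = O(n')$.

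The hard part will be handling the surviving digons (pairs of parallel edges that bound no face but still appear because something is embedded strictly between them) rigorously in the Euler-formula count, since these violate the naive ``every face has length $\ge 3$'' assumption. I expect to resolve this by showing each such digon forces a genuine enclosed vertex that can be charged against, so that the number of multi-edge pairs is itself $O(n')$; combined with the $3n'-6$ bound on the underlying simple graph, this gives the desired linear bound $m' = O(n')$. I would want to state the implied constant explicitly (it should be small, on the order of $6$ or so) since the run-time analysis that follows will depend on it.
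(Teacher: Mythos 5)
Your proposal is correct and, despite the detour through Euler's formula, lands on exactly the paper's argument: bound the underlying simple graph by $3n'-6$ via planarity, then injectively charge each surviving extra parallel copy to a vertex embedded strictly between consecutive copies (which must exist because step~4 removed all empty digons). The paper's proof is precisely this charging scheme, so no further comparison is needed.
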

\begin{proof}
The underlying simple graph is planar, so it is enough to show that the number of repeated edges is linear in the number of vertices. If there is more than one copy of some edge $(u,v)$, then for each consecutive pair of edges $e$ and $e'$, there must be a vertex, $w$, incident to $u$ or $v$ and embedded inside the Jordan curve formed by $e$ and $e'$ (i.e.\ an obstruction to a nicely-embedded family). We can create an injective mapping from the additional copies of $(u,v)$ to these $w$ vertices. Since the mapping is injective, there can only be at most a linear number of multi-edges, so the total number of edges of $G'$ is linear in the number of vertices as claimed.
\qed \end{proof}

By Lemma \ref{edgesOn}, we can therefore use ``linear'' to mean linear in the number of vertices, or equivalently the number of edges in the (multi-) graph $G'$.
 
We can find all $r$-families in linear time as follows: sort the set of degree-2 vertices by their neighbors (for example, the sum of the labels of their neighbors).  Now $r$-families are consecutive in this order and easily identifiable.  Identifying nicely-embedded $r$-families from this can also be done in linear time: for each neighbor of an $r$-family, iterate through its incident edges in their cyclic order in the embedding. Assign labels to edges incident to degree-$2$ vertices, using the same label until an edge not incident to a degree-$2$ vertex is found (at which point we switch labels). We can then radix sort degree-$2$ vertices again (by these assigned labels). Every nicely embedded family has the same labels on its incident edges (and so appear consecutively in the sort).

To find parallel edges bounding a common face: for every multi-edge $(u,v)$, choose one of the edge copies $e$. Find the next edge in clockwise order around $u$ and in counter-clockwise order around $v$. If these are the same edge, then delete $e$. Iterating this process for all $e=(u,v)$ and for all adjacent vertices $u$ and $v$ removes all consecutively embedded edges. Since there are a constant number of lookups per edge, this takes linear time.

All other steps besides the decomposition improvement algorithm can easily be done in linear time as well.

In Section \ref{improveAlg}, we show that the decomposition improvement algorithm runs in time linear in the size of $G$ (though exponential in the desired em-width). By Theorem \ref{noNiceK23}, an $\Omega(n)$ matching always exists, and the maximal matching we find is a $2$-approximation of the optimal, so the problem size decreases by a constant factor at each step. Thus the overall running time is linear in the size of the graph (and exponential in the desired em-width).

\subsection{Improving Good Decompositions} \label{improveAlg}
We use the following algorithm of Bodlaender and Kloks as a black box:

\begin{theorem}[Bodlaender and Kloks \cite{Bodlaender91}] \label{BodlaenderImprove}
For all constants $h,k \in \mathbb{Z^+}$ there is an algorithm which, given a graph $G$ and tree decomposition for $G$ of width $h$, produces a tree decomposition for $G$ of width $k$ or determines that such a decomposition does not exist, in $O(n)$ time.
\end{theorem}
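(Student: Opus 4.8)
The plan is to perform bottom-up dynamic programming over the supplied width-$h$ decomposition, which is what makes the result possible: it already provides the tree structure to recurse on, and, crucially, it caps every bag at the constant size $h+1$. First I would massage the input into a \emph{nice} tree decomposition of the same width by rooting it, making it binary, and forcing every internal node to be of the standard introduce, forget, or join type, with single-vertex leaf bags. This preprocessing only inflates the node count by a constant factor (so it remains $O(n)$) and never raises the width, via the usual transformation. For a node $i$, let $G_i$ denote the subgraph induced by all vertices appearing in the subtree rooted at $i$; the DP will tabulate, at each $i$, the ways a width-$\le k$ tree decomposition of $G_i$ can meet the boundary bag $X_i$.

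The conceptual engine is a Myhill--Nerode-style equivalence on partial solutions. A candidate at $i$ is a width-$\le k$ tree decomposition of $G_i$, and what governs whether it can be extended upward is only its behavior on the $\le h+1$ boundary vertices of $X_i$. I would therefore define a \emph{characteristic} recording (a)~the restriction of the candidate to $X_i$, namely the tree obtained by intersecting each of its bags with $X_i$ — a bounded object since $X_i$ is bounded — together with (b)~enough information about the sequence of bag sizes along each relevant path to certify that the width bound $k$ survives future extensions. I would then compute the set of realizable characteristics at $i$ from those of its children by local rules: an introduce node inserts the new boundary vertex into the traced bags, a forget node drops a vertex from each restriction, and a join node merges two children's characteristics precisely when their restrictions to the shared bag agree. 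Feasibility of width $k$ is read off at the root, and back-pointers recorded during the propagation let me reconstruct an explicit width-$k$ decomposition whenever one exists.

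The hard part — and the reason a naive version of this DP yields only an $n^{O(k)}$ bound — is proving that the number of distinct characteristics at any node is bounded by a function of $h$ and $k$ \emph{alone}, independent of $n$. This is where I would bring in the technique of \emph{typical sequences}: the sequence of bag sizes along a path admits a lossless compression to a ``typical'' subsequence whose length is bounded in terms of $k$, and one shows that two candidates whose boundary restrictions coincide and whose path profiles share the same typical sequences are interchangeable for the purpose of extension. Establishing this interchangeability requires a delicate exchange/surgery argument on tree decompositions, and it simultaneously (i)~bounds the characteristic set by a constant and (ii)~justifies correctness of the introduce, forget, and join rules. Granting these two facts, the algorithm does $O(1)$ work at each of the $O(n)$ nodes, which gives the claimed linear running time with all dependence on $h$ and $k$ folded into the hidden constant.
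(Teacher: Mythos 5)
The paper does not prove this statement at all: it is imported verbatim from Bodlaender and Kloks \cite{Bodlaender91} and used strictly as a black box (the text says so explicitly), so there is no in-paper argument to compare yours against. Judged on its own terms, your outline faithfully reproduces the architecture of the actual Bodlaender--Kloks proof: convert to a nice tree decomposition, run a bottom-up DP whose states are ``characteristics'' of partial width-$k$ decompositions restricted to the current bag, and invoke typical sequences to compress the bag-size profiles so that the state space depends only on $h$ and $k$.

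That said, what you have is a correct roadmap rather than a proof. The entire difficulty of the theorem lives in the two items you defer: (i) the precise definition of a typical sequence and the proof that the number of typical sequences of integers in $\{0,\dots,k\}$ is bounded by a function of $k$ alone (Bodlaender and Kloks show it is at most roughly $\frac{8}{3}2^{2k}$), and (ii) the interchangeability lemma --- that two partial decompositions with the same characteristic are extendible in exactly the same ways --- which is what justifies discarding all but one representative per characteristic and is proved by a genuinely delicate surgery on tree decompositions. You name both obstacles accurately, but ``granting these two facts'' is granting essentially the whole theorem. If your goal were to supply a self-contained proof, you would need to carry out the typical-sequence combinatorics and the exchange argument; if the goal is merely to justify using the result the way this paper does, a citation suffices and your sketch is a reasonable gloss on what is being cited.
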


Bodlaender and Kloks describe these algorithms explicitly (using a dynamic programming approach). 
Let $T$ be an em-decomposition  of $G$ of width at most $2k+1$ as guaranteed by the first 8 steps of our algorithm. Construct the facial completion  $\widetilde{G}$ of $G$, and input $T$ and $\widetilde{G}$ to the algorithm of Theorem~\ref{BodlaenderImprove}. This returns a tree decomposition, $T'$, of $\widetilde{G}$ of width $k$ (or indicates that none exists). By Lemma~\ref{lem:completionWorks}, if $T'$ exists, it is also an em-decomposition of $G$ of width $k$.

We can construct $\widetilde{G}$ in $O(k^2 n)$ time: by Euler's Formula, there are $O(n)$ faces, and each interior face is of length at most $\ell \leq k$ so we add at most $O(k^2)$ edges (requiring at most $k^2$ time) to add a clique to each face.

This completes the analysis of our algorithm to calculate em-width. 

\paragraph{Acknowledgements} This work was completed while Robbie Weber was participating in a Research Experience for Undergraduates program at Oregon State University and while an undergraduate at the University of Illinois at Urbana-Champaign.  This material is based upon work supported by the National Science Foundation under Grant Nos.\ CCF-1252833 and CCF-1408763.

\bibliographystyle{plain}
\bibliography{width_bib}

\appendix

\section{Proof of Theorem \ref{thm:emw3kl}}
We require the following:
\begin{lemma} \label{lem:dualOP}
	If $G$ is a $k$-outerplanar graph then $G^+$ is at most $k$-outerplanar.
\end{lemma}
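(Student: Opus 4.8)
The plan is to use the outerplanarity labeling of $G$ directly and show that it controls the outerplanarity of $G^+$. For a bounded face $f$ of $G$, let $\mu(f)$ denote its label in the outerplanarity labeling of $G$ (the smallest label of a vertex of $\partial f$), and let $\lambda^+(f)$ denote the label of the corresponding vertex of $G^+$ in the outerplanarity labeling of $G^+$. Since every vertex of $G$ has label at most $k$, we have $\mu(f) \le k$ for every bounded face, so it suffices to prove the pointwise bound $\lambda^+(f) \le \mu(f)$; taking the maximum over all $f$ then gives that $G^+$ is at most $k$-outerplanar. I would assume $G$ (and hence, componentwise, $G^+$) connected, treating the general case by applying the argument to each component.

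The first and most important step is the base layer: I would show that every face $f$ with $\mu(f) = 1$ lies on $\partial f_\infty(G^+)$. If $\mu(f) = 1$ then some vertex $v \in V(\partial f_\infty(G))$ lies on $\partial f$, so the vertex of $G^+$ for $f$ belongs to $v^+$ and hence to $\partial v^*$. In $G^*$ the face $v^*$ is bounded by the closed walk $W^* = \partial v^*$ (as in the proof of Lemma~\ref{newdualConn}), and because $v$ lies on $\partial f_\infty(G)$, the dual vertex $f_\infty^*$ appears on $W^*$; that is, $v^*$ is incident to $f_\infty^*$. Deleting $f_\infty^*$ from $G^*$ to form $G^+$ therefore merges $v^*$ into the outer face of $G^+$, placing every vertex of $v^+$---in particular $f$---on $\partial f_\infty(G^+)$. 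This is exactly the mechanism by which removing $f_\infty^*$ ``opens'' the dual boundary around boundary vertices of $G$, and it is the crux of the lemma.

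The remaining steps set up an induction on $k$. Let $G' = G - V(\partial f_\infty(G))$, which is $(k-1)$-outerplanar, and let $B_1 = \partial f_\infty(G^+)$ be the outer layer of $G^+$. Every face with $\mu(f) = 1$ lies in $B_1$ by the previous step, so every face surviving in $G^+ - B_1$ has $\mu(f) \ge 2$; such a face touches no vertex of $\partial f_\infty(G)$ and hence survives unchanged as a bounded face of $G'$, with its $G'$-label equal to $\mu(f) - 1$ (peeling one layer decrements all surviving labels). I would then check that $G^+ - B_1$ is a subgraph of $(G')^+$: any edge of $G^+$ between two faces of label $\ge 2$ is dual to a primal edge whose endpoints avoid $\partial f_\infty(G)$, so that edge and the two faces it separates persist in $G'$, giving the same dual adjacency in $(G')^+$. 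By the inductive hypothesis $(G')^+$ is at most $(k-1)$-outerplanar, and since outerplanarity labels are monotone under taking subgraphs, $\lambda^+_{G^+ - B_1}(f) \le \lambda^+_{(G')^+}(f) \le \mu(f) - 1$. Because $B_1$ is precisely the outer layer of $G^+$, peeling it once gives $\lambda^+(f) = 1 + \lambda^+_{G^+ - B_1}(f) \le \mu(f)$, completing the induction; the base case $k = 1$ is immediate, since then every face has $\mu(f) = 1$ and the first step puts all of $G^+$ on its outer face.

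I expect the main obstacle to be the bookkeeping at the interface between $G^+$ and $(G')^+$: verifying that peeling $\partial f_\infty(G)$ does not create new bounded faces that interfere with the deeper dual adjacencies, and that the inherited embeddings of $G^+ - B_1$ and $(G')^+$ are compatible enough to invoke subgraph-monotonicity of the outerplanarity labeling. Establishing the base-layer claim cleanly---via the merging of $v^*$ into the outer face upon deleting $f_\infty^*$---is what makes the rest routine.
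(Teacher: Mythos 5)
Your proof is correct and takes essentially the same route as the paper: both transfer the outerplanarity labeling of $G$ to $G^+$ by showing that the dual layer of a bounded face is controlled by the minimum (equivalently, in the paper's reversed labeling, the maximum) primal label on its boundary. The paper packages this as a direct bottom-up construction of a valid labeling of $G^+$ from the sets $F_j$ of finite faces of the inner layers, while you peel the outer layer and induct, but the underlying primal--dual layer correspondence is identical.
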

\begin{proof} 

Consider the vertex labeling $\ell\colon V \rightarrow [1,k]$ such that for all $j$, the vertices labeled $j$ are on the boundary of the graph induced by the vertices labeled $1$ through $j$.  (This is the reverse of the standard {\em outerplanarity labeling}.)  We derive an outerplanarity labeling $\ell^+$ of $G^+$ as follows.  Let $F_j$ be the set of finite faces of the subgraph of $G$ induced by the vertices labeled $1$ through $j$.  Set $\ell^+(\alpha^*) = j$ if $\alpha$ is a face in $F_j \setminus F_{j-1}$.  If $\alpha \in F_j \setminus F_{j-1}$, then $\partial \alpha$ contains a vertex on the boundary of the graph induced by the vertices labeled $1$ through $j$. Therefore, $\alpha^*$ is on the boundary of the dual graph $G^*$ induced by the vertices in $F_j$.  Since $\ell^*$ labels all finite faces of $G$, $\ell^+$ labels all vertices of $G^+$, and this labeling is a valid $k$-outerplanarity labeling of $G^+$.
\qed \end{proof}

\begin{proof}[of Theorem~\ref{thm:emw3kl}]
We use the same construction we used in the proof of Theorem~\ref{thm:emwtwAll}. Recall that we divide $G$ into a pseudo block decomposition, $\mathcal{B}$, where the blocks are nontrivial subgraphs with connected weak duals or single edges. For each nontrivial $B \in \mathcal{B}$ we construct $(T_B, \mathcal{X_B})$ by finding $(T_B^+, \mathcal{X}_B^+)$, an optimal tree decomposition of $B^+$, and converting each appearance of $\alpha$ to $\{u \in G | u \in \partial \alpha\}$. Since $G$ is $k$-outerplanar, every $B$ is at most $k$-outerplanar, so by Lemma~\ref{lem:dualOP} every $B^+$ is at most $k$-outerplanar. By Observation~\ref{obs:tw3k1}, each bag of $(T_B^+, \mathcal{X}_B^+)$ is of size at most $3k$. By construction each bag of $(T_B, \mathcal{X}_B)$ contains at most $3k\ell$ vertices (where $\ell$ is the maximum number of vertices in the boundary of a finite face). Thus the width is at most $3k\ell - 1$. 
\qed \end{proof}

\section{Proof of Theorem \ref{noK2r}}

We begin with some needed results: 

\begin{theorem}[Robbins \cite{Robbins39}] \label{Robbins}
	An undirected graph, $G$, has a strongly-connected orientation if and only if $G$ is $2$-edge-connected.
\end{theorem}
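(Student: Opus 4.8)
The plan is to prove both directions of the biconditional, treating the forward direction as a quick obstruction argument and reserving the real work for the reverse direction, which I would handle by orienting a depth-first search tree.

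\textbf{Forward direction.} Suppose $G$ has a strongly-connected orientation $D$. Strong connectivity already forces $G$ to be connected, so it suffices to show $G$ has no bridge. If some edge $e = uv$ were a bridge, it would receive a single orientation in $D$, say $u \to v$. Deleting $e$ splits $G$ into a component $C_u \ni u$ and a component $C_v \ni v$, and $e$ is the only edge joining them. Since $e$ is oriented $u \to v$, the only arc of $D$ crossing this cut runs from $C_u$ to $C_v$, so no vertex of $C_v$ (in particular $v$) can reach any vertex of $C_u$ (in particular $u$). This contradicts strong connectivity, so $G$ is bridgeless and connected, i.e.\ $2$-edge-connected.

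\textbf{Reverse direction.} Assume $G$ is $2$-edge-connected. I would run a depth-first search from an arbitrary root $r$ to obtain a spanning DFS tree; since $G$ is undirected, every non-tree edge is a \emph{back edge} joining a vertex to one of its ancestors. Orient each tree edge downward (parent to child) and each back edge upward (descendant to ancestor). Following tree edges, $r$ reaches every vertex, so the only thing to establish is that every vertex reaches $r$. The key step is the standard DFS characterization of bridges: a tree edge $pv$ (with $p$ the parent of $v$) is a bridge exactly when no back edge joins the subtree $T_v$ rooted at $v$ to a proper ancestor of $v$, because the only edges leaving $T_v$ are $pv$ itself and such back edges. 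Since $G$ has no bridge, such a back edge always exists. I would then induct on depth: the root reaches itself, and for $v$ at depth $d$ I descend through downward tree edges within $T_v$ to the tail of a back edge leaving $T_v$, cross that back edge upward to an ancestor $a$ with $\mathrm{depth}(a) \le d-1$, and apply the inductive hypothesis that $a$ reaches $r$. Hence every vertex reaches $r$, and the orientation is strongly connected.

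The main obstacle is making the reverse direction precise: verifying the bridge/back-edge characterization and confirming that the induction on depth is well-founded, since one must check that the chosen back edge strictly decreases depth so the recursion terminates at $r$. An alternative I would keep in reserve, avoiding reachability bookkeeping, is a closed ear decomposition of the $2$-edge-connected graph: orient the initial cycle as a directed cycle and each successive ear as a directed path (or a directed cycle, for a closed ear), then argue by induction that strong connectivity is preserved as each ear is glued on, since the ear's internal vertices can both reach and be reached from its endpoints. There the obstacle merely shifts to invoking the ear decomposition theorem rather than proving reachability directly.
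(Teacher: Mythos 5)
Your proof is correct, but note that the paper does not prove this statement at all: it is imported verbatim as Robbins' 1939 theorem and used as a black box inside the proof of Lemma~\ref{orient}, so there is no internal argument to compare against. What you supply is the standard modern proof. The forward direction (a bridge yields a one-way cut that kills strong connectivity) is exactly right. The reverse direction via DFS is also sound: in an undirected DFS every non-tree edge is a back edge, the bridge characterization you invoke (a tree edge $pv$ is a bridge iff no back edge leaves $T_v$ for a proper ancestor of $v$) is correct because those back edges and $pv$ are the only edges in the cut $(T_v, V\setminus T_v)$, and your induction on depth is well-founded provided you phrase it as strong induction, since the ancestor $a$ you reach satisfies $\mathrm{depth}(a)\le d-1$ but need not sit at depth exactly $d-1$. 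The ear-decomposition alternative you sketch is essentially Robbins' original argument. The only cosmetic caveat is the degenerate case of graphs on at most two vertices, where ``$2$-edge-connected'' needs a convention; this does not affect the use the paper makes of the theorem, since Lemma~\ref{orient} applies it only to nontrivial maximal $2$-connected components.
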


\begin{lemma} \label{orient}
	If $G$ is a connected undirected graph, then $G$ has an orientation with at most one sink vertex, a vertex with outdegree 0.
\end{lemma}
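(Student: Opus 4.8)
The plan is to construct the desired orientation directly from a rooted spanning tree, so that a single distinguished vertex is forced to be the only possible sink. First I would fix an arbitrary vertex $r$ of $G$ and compute a spanning tree $T$ of $G$ rooted at $r$; such a tree exists because $G$ is connected. I would then orient every tree edge from child toward parent (that is, pointing toward the root $r$), and orient every non-tree edge of $G$ arbitrarily.

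The key observation is that every vertex $v \neq r$ has a unique parent $p(v)$ in $T$, and the tree edge $\{v, p(v)\}$ is oriented as $v \to p(v)$. Hence every such $v$ has outdegree at least $1$ and cannot be a sink. The only vertex without a parent edge is the root $r$, so $r$ is the only vertex that can possibly have outdegree $0$. This already yields an orientation with at most one sink, as claimed. The only thing to verify is that no non-root vertex is a sink, which is immediate from the parent-edge argument, so there is no real obstacle in this route.

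Alternatively, one could route through Robbins' theorem (Theorem~\ref{Robbins}): decompose $G$ into its $2$-edge-connected components, which are arranged in a tree structure via the bridges of $G$; orient each component strongly (so that it has no internal sink, by Robbins) and orient all bridges toward a chosen root component, giving every non-root component an outgoing bridge edge. The delicate point in that approach is handling trivial single-vertex components, where a strong orientation is empty and the vertex's only outgoing edges come from bridges; one must argue carefully that the unique potential sink can be confined to the root component. Since the spanning-tree construction sidesteps this case analysis entirely, I would present that as the main proof and, if desired, remark that Robbins' theorem gives a second derivation.
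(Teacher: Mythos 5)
Your spanning-tree argument is correct and is genuinely more elementary than the paper's proof. You fix a root $r$, orient every tree edge from child to parent, and orient non-tree edges arbitrarily; every vertex other than $r$ then has an outgoing parent edge, so $r$ is the only possible sink. This cleanly establishes ``at most one sink,'' which is exactly what the lemma claims and all that the application in Case~2b of Theorem~\ref{noK2r} requires (there, having at most one sink per component of $\overrightarrow F$ is what guarantees an edge is added to the matching for at least half the vertices). The paper instead takes the route you sketch as your alternative: it invokes Robbins' theorem (Theorem~\ref{Robbins}) to strongly orient each non-trivial maximal $2$-edge-connected component, contracts those components to obtain a tree, and orients that tree toward a root; this yields the slightly stronger conclusion that there is \emph{no} sink whenever the root component is non-trivial, but that extra strength is never used downstream. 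Your identification of the delicate point in the Robbins route --- trivial single-vertex components, whose strong orientation is empty --- is accurate; the paper handles it by noting that only the root can end up as such a trivial component without an outgoing bridge. In short, your main proof is valid, shorter, and avoids an external theorem; the paper's proof buys a marginally stronger statement that the rest of the argument does not need.
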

\begin{proof}
  We orient every non-trivial maximal $2$-connected component of $G$ to have no sink (as guaranteed by Theorem~\ref{Robbins}).  To orient the remaining edges, we contract each of these components, which results in a tree; we pick a root $r$ of the tree and orient all the edges in the tree toward $r$.  If $r$ corresponds to a trivial $2$-connected component of $G$, $r$ will be a sink.  If $r$ corresponds to a non-trivial $2$-connected component of $G$, $G$ will have no sinks.
\qed \end{proof}

\begin{theorem}[Nishizeki \cite{Nishizeki79}] \label{deg3match}
If $G$ is a connected planar graph with minimum degree $3$ and at least $10$ vertices, then $G$ has a matching of size $\lceil \frac{n+2}{3} \rceil$.
\end{theorem}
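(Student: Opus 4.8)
The plan is to prove the bound via the Tutte--Berge formula, reducing the desired matching lower bound to an upper bound on the deficiency, and then to control the deficiency using the minimum-degree hypothesis together with the Euler-formula bound for bipartite planar graphs. Recall that the Tutte--Berge formula gives $\nu(G) = \tfrac12\bigl(n - d(G)\bigr)$, where $\nu(G)$ is the maximum matching size and $d(G) = \max_{S \subseteq V}\bigl(o(G - S) - |S|\bigr)$ is the deficiency, with $o(\cdot)$ counting odd components. Since $\nu(G)$ is an integer, it suffices to prove the real inequality $\nu(G) \ge (n+2)/3$, which rearranges to $d(G) \le (n-4)/3$. The entire problem thus becomes: \emph{bound the deficiency of a minimum-degree-$3$ planar graph by $(n-4)/3$}.

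To analyze the deficiency, I would fix any set $S$ achieving the maximum in the Tutte--Berge formula; write $s = |S|$ and $q = o(G - S)$, so that $d(G) = q - s$. Among the $q$ odd components of $G - S$, let $q_1$ be the number of \emph{singletons}; every other odd component, being odd and nontrivial, has at least $3$ vertices. Even components of $G - S$ contribute to $n$ but not to $q$, so they only help and may be ignored. Counting vertices gives $n \ge s + q_1 + 3(q - q_1)$, and a short manipulation shows that the target inequality $d(G) \le (n-4)/3$ follows once we establish the single structural estimate $q_1 \le 2s - 2$.

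The key step is this estimate on $q_1$. A singleton component $\{v\}$ has all of its neighbors in $S$ (being isolated in $G - S$), and since $G$ is simple with $\deg_G(v) \ge 3$, the vertex $v$ has at least $3$ \emph{distinct} neighbors in $S$; in particular, $q_1 \ge 1$ already forces $s \ge 3$. The edges of $G$ between $S$ and the singletons form a simple bipartite planar graph $B$ on $s + q_1$ vertices in which every singleton has degree at least $3$. Applying the bipartite-planar edge bound $|E(B)| \le 2(s + q_1) - 4$ against the lower bound $|E(B)| \ge 3 q_1$ yields $3 q_1 \le 2s + 2 q_1 - 4$, hence $q_1 \le 2s - 4 \le 2s - 2$, as required. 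Substituting this back into the vertex count completes the bound $d(G) \le (n-4)/3$, and therefore $\nu(G) \ge \lceil (n+2)/3 \rceil$.

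I expect the main obstacle to be the degenerate and boundary cases rather than the central counting, which is clean. In particular one must separately handle $S = \emptyset$ (where $G - S = G$ is connected, so $q \le 1$ and $d(G) \le 1 \le (n-4)/3$ once $n \ge 7$), the case $q_1 = 0$ (where the needed inequality reduces to $s \ge 1$, trivially true unless $S = \emptyset$), and the applicability of the bipartite-planar bound, which requires $s + q_1 \ge 3$ --- guaranteed by the observation that the presence of any singleton forces $s \ge 3$. The hypothesis $n \ge 10$ is precisely what rules out the small sporadic graphs on which these degenerate estimates would otherwise fail, so checking that threshold against each boundary case is the most delicate part of the write-up.
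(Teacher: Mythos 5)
Your proof is correct, and it is necessarily a different route from the paper's, because the paper offers no proof at all: this theorem is imported as a black box from Nishizeki and Baybars \cite{Nishizeki79}, where it is one case of a long, tight-constant analysis covering minimum degrees $3$, $4$, and $5$ under various connectivity hypotheses. Your streamlined deficiency argument checks out in every step: Tutte--Berge reduces the claim to $d(G)\le (n-4)/3$; the vertex count $n \ge s + q_1 + 3(q-q_1)$ gives $d \le \tfrac{1}{3}(n - 4s + 2q_1)$, so $q_1 \le 2s-2$ indeed suffices; and the bipartite-planar bound applies legitimately since the singleton--$S$ graph $B$ is a simple bipartite subgraph of a planar graph (any singleton forces $s\ge 3$, so $|V(B)|\ge 4$, and components of $B$ without a singleton are edgeless, so $|E(B)|\le 2(s+q_1)-4$ holds even when $B$ is disconnected), yielding the stronger $q_1 \le 2s-4$. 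The degenerate cases ($S=\emptyset$ using connectivity, and $q_1=0$ with $s\ge1$) are exactly the right ones to isolate, and in fact your argument needs only $n\ge 7$ rather than $n\ge 10$ --- the binding configuration is $q_1=0$, $s=1$, realized tightly by a vertex joined to $t$ disjoint triangles ($n=3t+1$, $\nu=t+1$), so your bound cannot be improved by this method. What the original proof buys that yours does not is the full family of tight bounds (e.g.\ for $\delta=4,5$ and higher connectivity); what yours buys is a self-contained, roughly one-page verification of precisely the statement the paper's matching bounds (Theorem~\ref{noK2r} and its corollaries) actually consume, which would let the paper stand without the external citation.
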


We can now prove Theorem \ref{noK2r}
\begin{proof}
We may assume that $G$ has at least 33 vertices, for otherwise a single-edge matching satisfies the claim. Let $c$ be the fraction of vertices of $G$ which are degree $2$.

\noindent {\bf Case 1:} $\mathbf{c \leq \frac{2r-1}{4r-1}}$ \\
Fix a planar embedding of $G$. We construct a new graph $G'$ iteratively starting with $G' = G$.  While $G'$ has a degree-2 vertex $v$, add a vertex $v'$ and connect $v'$ to $v$ and $v$'s neighbors.  Since $v$ and $v$'s neighbors bound a common face, $v'$ can be embedded in this face, giving a planar embedding.  The resulting graph $G'$ is therefore planar. 
By construction, $G'$ has minimum degree $3$ and $n + s\cdot n$ vertices (where $s\cdot n \leq c\cdot n$ is the number of added vertices).  By Theorem \ref{deg3match}, $G'$ has a matching of size $\lceil \frac{1}{3}(n + sn + 2) \rceil$. At most $sn$ of these edges are incident to vertices not in $G$, so $G$ has a matching of size at least $\lceil \frac{1}{3}(n+sn + 2)\rceil - sn \geq \frac{1}{3}n - \frac{2}{3}sn \geq \frac{1}{3}n - \frac{2}{3} \cdot \frac{2r-1}{4r-1}n = \frac{n}{12r-3}$.

\noindent {\bf Case 2:} $\mathbf{c \geq \frac{2r-1}{4r-1}}$\\
Call a degree-2 vertex lonely if both its neighbors are degree $2$; otherwise call it social. Let $q$ be the fraction of degree-$2$ vertices which are social (i.e. there are $qcn$ social vertices and $(1-q)cn$ lonely vertices).  We have two subcases:

\noindent {\bf Case 2a:} $\mathbf{q \leq \frac{2r-2}{2r-1}}$\\
Let $H$ be the subgraph of $G$ of edges incident to lonely vertices.  $H$ is a graph of maximum degree 2 and so is a collection of paths or cycles.  Therefore a maximum matching $M$ of $H$ matches at least all but one vertex from each component of $H$.  Since each component in $H$ must have at least three vertices, $M$ matches at least two-thirds of the vertices of $H$.  So $|M| \ge \frac{1}{3}|V(H)| \ge \frac{1}{3}(1-q)cn\geq \frac{cn}{3(2r-1)} \geq \frac{(2r-1)n}{3(4r-1)(2r-1)} = \frac{n}{12r-3}$.

\noindent {\bf Case 2b:} $\mathbf{q \geq \frac{2r-2}{2r-1}}$\\
Let $S$ be the set of social vertices.  We construct a large matching $M$ of $G$ of edges incident to social vertices.  $G[S]$ is a set of $t_1/2$ edges and $t_2$ isolated vertices (where $t_1 + t_2 = qcn$) because at most one neighbor of a social vertex can have degree 2 (and so also be social).  We add the $t_1/2$ edges of $G[S]$ to $M$.  It remains to find a matching in $G$ of edges incident to (a subset) of the $t_2$ isolated social vertices.

Let $I$ be the set of $t_2$ isolated social vertices. Let $H$ be the subgraph of $G$ of edges incident to $I$. Let $F$ be the graph obtained from $H$ by contracting one edge incident to each vertex in $I$.  (Conversely $H$ is obtained from $F$ by subdividing every edge.) $F$ is a multigraph, but since $G$ has no $r$-family, $F$ has no $r$ edges in parallel.  Thus $F$ is a planar multi-graph with, by construction, exactly $t_2$ edges (one edge for each social vertex). Since $F$ has no $r$ edges in parallel, the underlying simple graph $\bar F$ has at least $\frac{t_2}{r-1}$ edges.  Since every simple planar graph has at most $3n-6$ edges, $\bar F$ (and $F$) has at least $\frac{t_2}{3(r-1)}$ vertices.  

Let $\overrightarrow F$ be an orientation of (each component of) $\bar F$ as described in Lemma \ref{orient}.  For each non-sink vertex $u$ of $\overrightarrow F$, pick an arbitrary outgoing edge $uv$, let $s$ be the social vertex that subdivides $uv$ and add edge $us$ to the matching $M$.  By construction, the added edges guarantee that $M$ is a matching.  Since every component of $F$ has at least 2 vertices, and there can be at most one non-sink per component of $\overrightarrow F$, we add an edge to $M$ for each of at least half the vertices of $\bar F$.  That is, we add at least $\frac{t_2}{6(r-1)}$ edges to $M$.  Therefore\\ $|M| \ge t_1/2+ \frac{t_2}{6(r-1)} \ge \frac{qcn}{6(r-1)}\geq \frac{2r-2}{2r-1}\frac{cn}{6(r-1)} \geq \frac{2r-1}{3(4r-1)(2r-1)}n = \frac{n}{12r - 3}.$
\qed \end{proof}

\section{Comparison to Bodlaender's Algorithm}
Bodlaender's algorithm~\cite{Bodlaender93,BDDFLP16}, in time $O(k^{O(1)}n)$, finds (i) a maximal matching, say $M$, of size at least $\frac{n}{k^6}$ or (ii) a set of simplicial vertices, say $X$, of size at least $\frac{n}{k^6}$ or decides that the treewidth of $G$ is at least $k + 1$. Then, the algorithm either removes $X$ if its size is at least $\frac{n}{k^6}$ or contracts $M$ it its size is at least $\frac{n}{k^6}$ from the graph and recursively finds the optimal tree decomposition of the resulting graph. Thus, in each recursive step, the size of the graph reduced by $1 - \frac{1}{k^6}$ factor. Our algorithm, by contrast, guarantees that one can always find a set of $r$-families (that can be handled easily) or a maximal matching of size $\frac{n}{37}$ (Theorem~\ref{noNiceK23}) so that the contraction reduces the size of the graph by a constant factor.
The full dependence of Bodlaendar's algorithm on $k$ is not computed in his work, so we do not attempt to find an exact comparison of the dependence in this abstract.
\end{document}